\definecolor{r}{rgb}{1,0,0}
\definecolor{b}{rgb}{0,0,1}
\definecolor{g}{rgb}{0,1,0}
\renewcommand{\P}{L}
\newcommand{\B}{\mathrm{B}}
\newcommand{\Q}[1]{\mathrm{Ch}_{#1}}
\newcommand{\id}{\mathtt{id}}
\newcommand{\sgn}{\mathtt{sgn}}
\newcommand{\CC}{\mathbb{C}}
\newcommand{\LL}{\mathfrak{L}}
\newcommand{\SG}{\mathfrak{S}}
\newcommand{\G}{\mathcal{G}}
\newcommand{\HH}{\mathcal{H}}
\newcommand{\ZZ}{\mathbf{D}_4}
\newcommand{\ZZo}{\mathfrak{S}_2 \wr \mathfrak{S}_2 }
\newcommand{\dimC}{ \dim_{\mathbb{C}}}
\newcommand{\dimR}{ \dim_{\mathbb{R}}}
\newcommand{\idmat}{\mathbf{1}}
\newcommand{\fra}[2]{\textstyle{\frac{#1}{#2}}}
\newcommand{\lrc}[1]{{\langle #1 \rangle}_{\mathbb{C}}}
\newcommand{\lie}[2]{\left[#1,#2\right]}
\renewcommand{\1}{\mathbf{1}}
\newcommand{\trip}[3]{(#1,#2,#3)}
\renewcommand{\qed}{\hfill $\diamond$}
\theoremstyle{plain}
\newtheorem{theorem}{Theorem}[section]
\theoremstyle{definition}
\newtheorem{definition}[theorem]{Definition}
\newtheorem{example}{Example}
\newtheorem{remark}{Remark}
\newtheorem{note}{Note}
\renewenvironment{proof}{\noindent {\textsc{Proof.}}}{$\square$ \vspace{3mm}}
\newif\ifprivate
\def\???{\ifprivate {\bf {???}} \marginpar{{\Huge {\bf ?}}}
\else \fi}
\title{Lie Markov models with purine/pyrimidine symmetry}
\author{ Jes\'us Fern\'andez-S\'anchez$^{1}$, Jeremy G. Sumner$^{2}$,\\ Peter D. Jarvis$^{3}$, Michael D.  Woodhams$^{4}$}
\begin{document}
\maketitle

\begin{abstract}
Continuous-time Markov chains are a standard tool in phylogenetic inference. 
If homogeneity is assumed, the chain is formulated by specifying time-independent rates of substitutions between states in the chain. 
In applications, there are usually extra constraints on the rates, depending on the situation.
If a model is formulated in this way, it is possible to generalise it and allow for an inhomogeneous process, with time-dependent rates satisfying the same constraints. 
It is then useful to require that there exists a homogeneous average of this inhomogeneous process within the same model.
This leads to the definition of ``Lie Markov models'', which are precisely the class of models where such an average exists.
These models form Lie algebras and hence concepts from Lie group theory are central to their derivation. 
In this paper, we concentrate on applications to phylogenetics and nucleotide evolution, and derive the complete hierarchy of Lie Markov models that respect the grouping of nucleotides into purines and pyrimidines -- that is, models with purine/pyrimidine symmetry. 
We also discuss how to handle the subtleties of applying Lie group methods, most naturally defined over the complex field, to the stochastic case of a Markov process, where parameter values are restricted to be real and positive.
In particular, we explore the geometric embedding of the cone of stochastic rate matrices within the ambient space of the associated complex Lie algebra.

The whole list of Lie Markov models with purine/pyrimidine symmetry is available at \verb+http://www.pagines.ma1.upc.edu/~jfernandez/LMNR.pdf+.
\end{abstract}

\vspace{3mm}
\noindent
\textit{keywords:} phylogenetics, Lie algebras, Lie groups, representation theory, symmetry, Markov chains

\vfill
\hrule\mbox{}\\
\thanks{\footnotesize{
\noindent
$^1$Departament de Matem\`atica Aplicada I, Universitat Polit\`ecnica de Catalunya, Spain \\ e-mail: jesus.fernandez.sanchez@upc.edu\\
$^{234}$ School of Mathematics and Physics, University of Tasmania, Australia\\ $·^2$ ARC Research Fellow; e-mail: jsumner@utas.edu.au\\ 
$^3$ Alexander von Humboldt Fellow; e-mail: peter.jarvis@utas.edu.au\\$^4$ e-mail: michael.woodhams@utas.edu.au\\

}
}

\section{Introduction}
\label{intro}

Most of the commonly implemented models in molecular phylogenetics are based on the continuous-time Markov assumption. 
For these models, molecular substitution events (along an edge of a phylogenetic tree) are ruled by substitution rates. 
For DNA models -- where the state space consists of the four nucleotides adenine, cytosine, guanine and thymine -- twelve substitution rates must be specified for each edge of the evolutionary tree, and the precise characteristics of the process are fixed by constraints on these rates. 
These constraints define a space of parameter values where each point corresponds to unknown evolutionary quantities such as base composition and mutation rate. 
As in all applied statistics, there is a trade-off between more complex, realistic models, and simpler, tractable models:  complex models can provide very close fits to the observed data, but are more vulnerable to random error. 
A standard assumption in molecular phylogenetics is to work with homogeneous Markov chains, where the substitution rates are assumed to be constant in time.  
 
The motivation behind our previous work \cite{LMM} was to consider the consequences of allowing for  some change in individual substitution rates which may well have occurred independently across the evolutionary history. 
With this perspective, the evolutionary process can still be modeled as an continuous-time Markov chain, but we must allow the process to be inhomogeneous, where the rates are allowed to vary as a function of time throughout the evolutionary history. 
This leads to considering evolutionary model classes that are ``multiplicatively closed'', that is, models where the product of substitution matrices is still in the model. 
For such models, it it is possible to interpret the time average of their inhomogeneous behaviour as a homogeneous process within the same model class. 
Many oft-used models, such as the general time-reversible model \cite{tavare1986,posada1998}, are not multiplicatively closed and this deficiency poses a problem for phylogenetic analysis in both flexibility of interpretation, and as a potential source of model-misspecification \cite{sumner2012}. 
For a multiplicatively closed model, it is possible to model evolutionary processes homogeneously, by interpreting the fitted substitution rates as an ``average'' of the true inhomogeneous process occurring on each branch of the tree. 
In \cite{LMM}, we presented sufficient conditions for multiplicative closure of continuous-time Markov chains,  and this lead directly to the concept of Lie Markov models.
These models arise when we demand  the set of rate matrices of the model  form a Lie algebra. 
This is a technical condition guaranteeing the corresponding set of substitution probability matrices will be multiplicatively closed, as desired. 
Mathematically, Lie Markov models can be regarded as a generalisation of other model classes, such as equivariant \cite{draisma2008} or group-based  \cite[Chap. 8]{semple2003} models. 

In \cite{LMM} we discussed the symmetry properties of DNA models to nucleotide permutations, and noted the statistical relevance of these symmetries to likelihood calculations.
The main result of that paper was a procedure to generate multiplicatively closed Markov models with a prescribed symmetry, which has desirable properties in terms of model selection. 
For instance, a biologist may wish that candidate models do not provide any natural groupings of nucleotides, and hence $\SG_4$ symmetry -- i.e. the symmetry group of all possible nucleotide permutations -- is appropriate. It is then a matter of choosing how many free parameters are appropriate for the given data set. 
The complete hierarchy of Lie Markov models with $\SG_4$ symmetry was derived in \cite{LMM}.

\begin{table}
\small{
\begin{tabular}{|c|c|m{8cm}|}
\hline
Model & Rate Matrix & Description \\
\hline \hline
%
\mbox{3.3b} & 
$\left(\begin{array}{cccc}
* & \alpha & \beta & \gamma \\
\alpha &* & \gamma & \beta \\
\gamma & \beta &* & \alpha \\
\beta & \gamma & \alpha & *
\end{array}\right)$  & A 3-dimensional model, with parameter $\alpha$ for transitions $A\leftrightarrow G, C\leftrightarrow T$, and two different parameters for transversions: $\beta$ for $A\mapsto C \mapsto G \mapsto T \mapsto A$ and $\gamma$ for $A\mapsto T \mapsto G \mapsto C \mapsto A$.\\ 
\mbox{3.3c} & 
$\left(\begin{array}{cccc}
*& \alpha & \beta & \beta \\
\alpha &* & \beta & \beta \\
\beta & \beta & * & \gamma \\
\beta & \beta & \gamma &*
\end{array}\right)$ &  A 3-dimensional model, with two parameters for transitions: $\alpha$ for $A\leftrightarrow G$ and $\gamma$ for  $C\leftrightarrow T$, and one parameter for transversions: $\beta$ for $A \leftrightarrow C \leftrightarrow G \leftrightarrow T \leftrightarrow A$. \\
& & Note the similarity of both of these models to the Kimura 3 parameter model \cite{kimura1981}, which also belongs to our hierarchy as Model 3.3a. \\
\hline
4.4a & 
$\left(\begin{array}{cccc}
* & \alpha  & \alpha & \alpha \\
\beta & *& \beta & \beta \\
 \gamma & \gamma & * & \gamma \\
\delta & \delta & \delta & *
\end{array}\right)$
 & The Felsenstein 1981 model \cite{felsenstein1981}. A 4-dimensional model, where mutations share the same rate when they change to the same nucleotide.  \\
\mbox{4.4b} & 
$\left(\begin{array}{cccc}
* & \alpha  & \beta & \beta \\
\alpha & *& \beta & \beta \\
\gamma & \gamma & * & \delta \\
\gamma & \gamma & \delta & *
\end{array}\right)$ & A 4-dimensional model, with two parameters for transitions: $\alpha$ for $A\leftrightarrow G$ and $\gamma$ for  $C\leftrightarrow T$, and the other two parameters for transversions: $\beta$ for $A\mapsto C \mapsto G \mapsto T \mapsto A$ and $\gamma$ for $A\mapsto T \mapsto G \mapsto C \mapsto A$.\\ 
\hline 
\mbox{5.6b} & 
$ \left(\begin{array}{cccc}
* & a+x & b+ x & b + x \\
a +y & * &b +y & b +y \\
b + z & b + z &* & a + z \\
b +t & b +t & a +t & *
\end{array}\right)$ & A 5-dimensional model, where rates depend on two families of parameters $\{a,b\}$ and $\{x,y,z,t\}$: transitions and tranversions have parameters $a$ and $b$ respectively, while they are affected by some other parameters according to the nucleotide they change to: $x$ for mutations to $A$, $y$ for mutations to $G$, $z$ for mutations to $C$ and $t$ for mutations to $T$. These parameters are subjected to the constraint $x+y+z+t=0$. Notice the resemblance of this model with HKY85 (see Remark \ref{HKY}). \\ \hline
\mbox{6.6} & 
$\left(\begin{array}{cccc}
* & \alpha & \beta & \gamma \\
\alpha & *& \gamma & \beta \\
\delta & \varepsilon & * & \zeta\\
\varepsilon & \delta & \zeta & *
\end{array}\right)$ & This 6-dimensional model has two parameters for different transitions: $\alpha$ for $A\leftrightarrow  G$ and $\zeta$ for  $C \leftrightarrow  T$, and 4 for transversions: $\beta: A \mapsto C, G\mapsto T$, $\gamma: A \mapsto T, G\mapsto C$, $\delta: C \mapsto A, G\mapsto T$, $\varepsilon: A \mapsto T, G\mapsto C$. By permuting rows and columns acoording to $(GT)$ (or $(AC)$), we obtain the \emph{strand symmetric} model (see \cite{cassull}). 
\\ \hline
\mbox{6.7a} & 
$\left(\begin{array}{cccc}
* & a + x & b + x & c + x \\
a + y & *& c + y & b + y \\
b +z & c+z & * & a +z \\
c +t & b +t & a +t & *
\end{array}\right)$ & This 6-dimensional model is similar to 5.6b, but  it has two parameters for different transversions: $b$ for  $A\mapsto C \mapsto G \mapsto T \mapsto A$ and $c$ for $A\mapsto T \mapsto G \mapsto C \mapsto A$.\\ \hline
\end{tabular}
}
\caption{\small{\label{some_models} Some Lie Markov models with purine/pyrimidine symmetry that may have special interest for biologists.}}
\end{table}

In this paper, we deal with the case of closed Markov models whose symmetry is consistent with the grouping of nucleotides into purines and pyrimidines, i.e. $AG\mid CT=\{\{A,G\},\{C,T\}\}$. 
As will be discussed, this motivates us to produce and examine the Lie Markov models with  symmetry governed by the permutation subgroup of $\SG_4$ that preserves the purine/pyrimidine grouping\footnote{Note this group is isomorphic to the dihedral group $\ZZ$, which describes the symmetries of a square.
However, it also admits a more natural description in our setting as $\ZZo$, the wreath product of $\mathfrak{S}_2$ with itself (see Chapter VII of \cite{rotman} for instance).} :
\[\G:=\{e,(AG),(CT),(AG)(CT),(AC)(GT),(AT)(CG),(ACGT),(ATGC)\},\]
where $e$ is the identity, or ``do nothing'', permutation.

We will also go further than \cite{LMM} by exploring the definition of these models and investigate the geometrical properties  that arise naturally when we deal with the tension between the algebraic formalism of Lie groups, where one works over the complex field, and the stochastic constraints of Markov models, where parameter values are constrained to be real and positive. 
In particular, we discuss the geometric embedding of the stochastic rate matrices within the vector space of complex rate matrices. 
These considerations motivate our definition of the \emph{stochastic cone} of a Lie Markov model. 
Besides its geometrical interest, the stochastic cone is the set of stochastic rate matrices of the model and in a practical context is actually the main object of interest. 
We will discuss implementation and performance of the models we present here in a sister paper \cite{woodhams2012}.

Although our presentation focuses on the purine/pyrimidine grouping $AG \mid CT$, given the appropriate nucleotide permutation, exactly the same hierarchy of models would arise if we were to consider the grouping $AC \mid GT$, or the grouping $AT \mid GC$.
The reader should note that choosing the grouping $AT \mid GC$ would give the classification of all Lie Markov models that preserve \emph{complementation} $A\leftrightarrow T$, $C\leftrightarrow G$ (see \cite{yap_pachter}). In particular, the ``strand-symmetric'' model defined by   Casanellas and Sullivant in  \cite{cassull} arises in this way from our Model 6.6 (see Table \ref{some_models}). The conversion of our hierarchy of models from the $AG \mid CT$ grouping to the $AT \mid GC$ grouping would follow by simultaneously permuting  the $G$ and $T$ rows and columns of the rate matrices in each model.

In Section~\ref{sec2} we recall some of the basic definitions and tools introduced in \cite{LMM}. 
We revisit the definition of Lie Markov models, and introduce the concept of the stochastic cone of a Lie Markov model. 
We also recall the basic results on group theory and representation theory  necessary for the development of our results. 
In Section~\ref{sec3} we recall the idea of Lie Markov model with prescribed symmetry given by a permutation group $G$. 
We introduce the ray-orbits of the corresponding stochastic cone, which are the orbits under the action of $G$ of the rays of the stochastic cone.  
In Section~\ref{sec4}, we take $G=\G $, decompose the space of rate matrices as a $\G$-module and provide a basis consistent with this decomposition. 
We also determine the isomorphism classes of possible $\G$-orbits and the decomposition of their (abstract) span into irreducible modules. 
In Section~\ref{sec5}, we give the whole list of Lie Markov models with purine/pyrimidine  symmetry. 
Each model is given by exhibiting a basis of the corresponding space of matrices as well as the ray-orbits of its stochastic cone. 
Among these models, we obtain the Jukes-Cantor model, the Kimura models with 2 or 3 parameters,
the general Markov model and a number of new models that may have special interest for the biologists. Some of these are shown in Table \ref{some_models} as an appetizer before the whole list, which itself can be found in explicit form at 
\verb+http://www.pagines.ma1.upc.edu/~jfernandez/LMNR.pdf+.
Finally, in the conclusions we discuss implications and possibilities for future research.

\section{Preliminaries}\label{sec2}
Throughout this section, we will recall some definitions and basic facts from \cite{LMM}, which we also refer to for some proofs. 
We keep the assumptions and the notation already introduced there. 
In particular, we work over the complex field $\mathbb{C}$, and for simplicity refer to a matrix as ``Markov'' if the entries in each column sum to one. 
Later we will discuss how to specialise to the stochastic case where the entries must be real numbers in the range $\left[0,1\right]$.
This will lead us to considering the stochastic cone of the Lie Markov model, which will be the set of real rate matrices with non-negative entries outside the diagonal. 

We define the \emph{general Markov} model $\mathfrak{M}_{GM}$ as the set of $n\times n$ matrices whose columns sum to one: 
\[\mathfrak{M}_{GM}:=\left\{M\in \mathbb{M}_n(\mathbb{C}) : \bm{\theta}^T M =\bm{\theta}^T \right\},\]
where $\bm{\theta}$  is the column $n$-vector with all its entries equal to $1$, i.e. $\bm{\theta}^T=(1,1,\ldots,1)$.
Recall that, in a  homogeneous continuous-time Markov chain, the corresponding Markov matrices occur as exponentials $M=e^{Qt}$, where $Q$ is a ``rate matrix'' and $t$ is time elapsed.
We write $\mathfrak{L}_{GM}=\left\{ Q\in M_n(\mathbb{C}) : \bm{\theta}^T Q =\bm{0}^T \right\}$, to indicate the set of all (complex) rate matrices.
We refer to a Markov matrix $M\in \mathfrak{M}_{GM}$, or a rate matrix $Q\in \mathfrak{L}_{GM}$, as ``stochastic'' if its off-diagonal elements are real and positive.

Under matrix multiplication, the set
\begin{eqnarray*}
GL_1(n,\mathbb{C}):=\left\{M\in \mathbb{M}_n(\mathbb{C}) : \bm{\theta}^T M =\bm{\theta}^T ,\det(M)\!\neq \!0\right\},
\end{eqnarray*}
forms a subgroup of the general linear group of invertible $n\times n$ matrices with complex entries, i.e. $GL_1(n,\mathbb{C})< GL(n,\mathbb{C})$.
It contains the matrix exponential of any rate matrix, that is, 
\begin{eqnarray*}
 e^{\mathfrak{L}_{GM}}:=\left\{e^Q : Q\in \mathfrak{L}_{GM}\right\}
\subset  GL_1(n, \mathbb{C}). 
 \end{eqnarray*}
We refer to $e^{\mathfrak{L}_{GM}}$ as the \emph{general rate matrix model}.

A \emph{Markov model} $\mathfrak{M}$ is some subset  $\mathfrak{M}\subseteq \mathfrak{M}_{GM}$ of the general Markov model containing the identity matrix $\mathbf{1}$.
A Markov model $\mathfrak{M}$ is \emph{multiplicatively closed} if for all $M_1$ and $M_2 \in \mathfrak{M}$ we also have the matrix product $M_1M_2\in \mathfrak{M}$.
Similarly, given a subset $\mathfrak{L}\subseteq \mathfrak{L}_{GM}$ of rate matrices, we refer to $e^{\mathfrak{L}}$ as a \emph{rate matrix model}.  
It is clear that all rate matrix models are Markov models, and we simplify terminology and also refer to $\mathfrak{L}$ as a ``model''.

We are primarily interested in rate matrix models $\mathfrak{M}=e^{\LL}$ which are multiplicatively closed. 
For such a model $\mathfrak{M}$, suppose it  is a smooth manifold  around the identity matrix $\idmat$, so there exist differentiable paths $A(t)\in \mathfrak{M}$ with $A(0)=\idmat$.
Then we can define the tangent space at the identity: $T_{\1}(\mathfrak{M})=\{A'(0):A(t)\in \mathfrak{M}, A(0)=\idmat \}$. 
Then, because of the Baker-Campbell-Hausdorff formula \cite{campbell1897}, if $T_{\1}(\mathfrak{M})$ forms a Lie algebra, $\mathfrak{M}$ is multiplicatively closed in a neighbourhood of $\mathbf{1}$.

Recall  $T_{\1}(\mathfrak{M})$ is a Lie algebra if for all $Q_1, Q
_2\in T_{\1}(\mathfrak{M})$ and $\lambda \in \mathbb{C}$:
\begin{enumerate}
\item $Q_1+\lambda Q_2\in T_{\1}(\mathfrak{M})$,
\item $\lie{Q_1}{Q_2}:=Q_1Q_2-Q_2Q_1\in T_{\1}(\mathfrak{M})$.
\end{enumerate}
The first condition states that $T_{\1}(\mathfrak{M})$ is a vector space, and the second states that $T_{\1}(\mathfrak{M})$ is closed under ``Lie brackets''.
%

Presently, we recall from \cite{johnson1985} and \cite{LMM} the Lie algebra structure of the general Markov model. 
To this aim, consider the set of ``elementary'' rate matrices $\{L_{ij}:1 \leq i\neq j\leq n\}$, where $L_{ij}$ is the $n\times n$ matrix with 1 in the $ij$ entry, -1 in the $jj$ entry and 0 everywhere else.
The matrices $\{L_{ij}\}_{i\neq j}$ form a $\mathbb{C}$-basis for the tangent space of $GL_{1}(n,\mathbb{C})$
and, in particular, we can express any rate matrix $Q$ as a linear combination:
\begin{eqnarray}\label{exp_rates}
Q=\sum_{i\neq j}\alpha_{ij}L_{ij}. 
\end{eqnarray}
This is a convenient basis for $\mathfrak{L}_{GM}$ because the stochastic condition on $Q$ is simply that the coefficients $\alpha_{ij}$ are real and non-negative. 
Moreover, if $\delta_{ij}$ denotes the Kronecker delta ($\delta_{ii}=1$ and $\delta_{ij}=0$ when $i\neq j$), the equalities 
\begin{eqnarray*}
 [L_{ij},L_{kl}]=(L_{ij}-L_{jl})(\delta_{jk}-\delta_{jl})-(L
_{kj}-L_{lj})(\delta_{il}-\delta_{jl})
\end{eqnarray*}
exhibit the Lie algebra structure of $\LL_{GM}$.

Given a vector subspace $\mathfrak{L}\subset \mathfrak{L}_{GM}$, a \emph{stochastic generating set} for $\mathfrak{L}$ is a generating set $B_\LL
=\{L_1,L_2,\ldots ,L_d\}$ of $\mathfrak{L}$ such that each $L_k$ is a non-negative linear combination of the $L_{ij}$, i.e.  $L_{k}=\sum_{i\neq j}\alpha_{ij}L_{ij}$ where $\alpha_{ij}\geq 0$. 
A \emph{stochastic basis} of $\LL$ is a stochastic generating set where the vectors are linearly independent.

\begin{definition}[cf. \cite{LMM}]\label{def:LMM}
A \emph{Lie Markov model} is a Lie subalgebra $\LL$ of $\LL_{GM}$ for which there exists a stochastic basis. 
\end{definition}

Leaving the technical aspects aside, a Lie Markov model is a model for which the product of two substitution matrices is still in the model. The motivation of such models is given by the fact a non-homogeneous evolutionary processes can be described in a homogeneous fashion. In more concrete terms, if $M_1, M_2 \in e^{\LL}$, then $M_1 M_2 \in e^{\LL}$,  i.e. for any inhomogeneous process on an edge where the rate matrices always lie within $\LL$, there is an equivalent homogeneous process on that edge, whose rate matrix also lies within $\LL$. This is not the case for the general time reversible model (the reader is referred to \cite{LMM} for a detailed proof of the non-closure of GTR). 

\begin{remark}
By an elementary result in linear algebra, any generating set for a vector space can be reduced to a basis by removing elements, and hence Definition~\ref{def:LMM} would remain unchanged if ``stochastic basis'' were replaced with ``stochastic generating set''. \qed
\end{remark}

\begin{figure}[t]
\begin{center}
\includegraphics[scale=0.5]{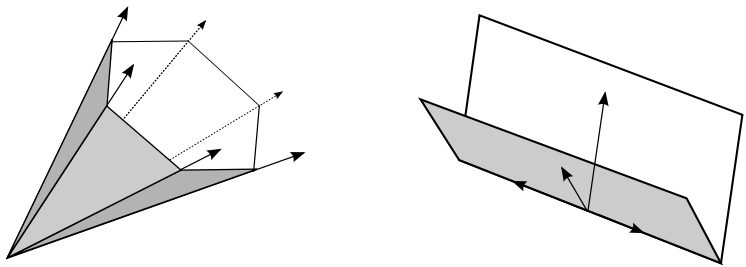}
\caption{\label{fig:sto_cone} \small{A strongly convex polyhedral cone of dimension 3 with 6 rays (represented by arrows) and a convex polyhedral cone which is not strongly convex.}}
\end{center}
\end{figure}

We are especially interested in the study of the set of stochastic rate matrices of the model. 
The condition of Definition~\ref{def:LMM} ensures  $\LL$ contains \emph{enough} stochastic rate matrices (see Theorem~\ref{res:stochastic_cone} forthcoming), and it is useful to give some geometrical interpretation of this condition. 
To this aim, we need to recall some basic definitions on convex polyhedral cones. 

Following \cite{convex}, a \emph{convex polyhedral cone} in $\mathbb{R}^n$ is defined as a set 
\begin{eqnarray*}
 C=\{\lambda_1 v_1+\ldots +\lambda_r v_r : \lambda_i\geq 0\}
\end{eqnarray*}
generated by some finite set of vectors $v_1,\ldots,v_r$ in $\mathbb{R}^n$. Such vectors are called \emph{generators} of the cone $C$.
The reader may note that,  with this definition, every linear subspace of $\mathbb{R}^n$ is a convex polyhedral cone.  When a convex polyhedral cone contains no nonzero linear subspaces, it is said to be \emph{strongly convex}. 
In this case, which has special interest for us, any minimal system of generators of the cone is unique up to multiplication with positive scalars 
\cite{convex}.
The \emph{rays} of the cone are the non-negative spans of each vector in a minimal system of generators, and they correspond to the 1-dimensional faces of the cone \cite{convex}; see Figure~\ref{fig:sto_cone} for an illustration. 
%
%
Farkas's theorem ensures the polyhedral cones can be equivalently defined as the intersection of finitely many halfspaces. 
It follows that the intersection of any two convex polyhedral cones in $\mathbb{R}^n$ is again a convex polyhedral cone. 
\begin{note}
Consider a collection of vectors $X=\{X_1,\ldots, X_r \}$.
In what follows we will use the notation $\mathbb{F}X$ or $\langle X_1,\ldots , X_r \rangle_\mathbb{F}$ to indicate the linear span of $X$ over the field $\mathbb{F}$, where $\mathbb{F}=\mathbb{R}$ or $\mathbb{C}$.
That is, \[\mathbb{F}X= \langle X_1,\ldots, X_r \rangle_\mathbb{F} :=\{\lambda_1 X_1+\ldots +\lambda_r X_r:\lambda_i \in \mathbb{F}\}.\]
Of course, $\mathbb{F}X$ is a vector space. In particular, we can consider $V:=\mathbb{C}X$ as a complex vector space with dimension $r$, or as a real vector space $V=\mathbb{R}X+\mathbb{R}({\bf i}X)$ with dimension $2r$.
To distinguish these dimensions, we use the notation $\dimC(V)=r$ and $\dimR(V)=2r$.\qed
\end{note}

The \emph{dimension} of the cone $C$ is defined as  the dimension of the linear space $\mathbb{R} C=C+(-C)$ spanned by $C$, i.e. $\dim(C):=\dim(\mathbb{R}C)$.
Of course, since a set of generators of a cone $C$ is also a system of generators of the linear space $\mathbb{R} C$, we conclude that the number of rays of a cone is at least its dimension.

\vspace{2mm}

Returning to our setting, we consider the \emph{real} vector space $\LL_{GM}^{\mathbb{R}}$ of dimension $n(n-1)$ spanned by the $n\times n$ elementary rate matrices $L_{ij}, i\neq j$ defined above. We denote 
\[\mathfrak{L}_{GM}^+=\{Q= \sum_{i\neq j}\alpha_{ij}L_{ij} \mid \alpha_{ij}\geq 0\}, \] which  is clearly a convex polyhedral cone in $\LL^{\mathbb{R}}_{GM}$. 
Given a (complex) vector subspace $\LL$ in $\LL_{GM}$, we  consider 
\begin{eqnarray*}
 \LL^+:=\mathfrak{L}\cap \LL^+_{GM}.
\end{eqnarray*}
Notice that all the entries of each matrix in $\LL^+$ are real and non-negative.

\begin{theorem}\label{res:stochastic_cone}
For any (complex) vector subspace $\LL$, 
 $\LL^+=\mathfrak{L}\cap \LL^+_{GM}$ is a strongly convex polyhedral cone in $\LL_{GM}^{\mathbb{R}}$.
 The dimension of $\LL^+$ as a cone is less than or equal to the complex dimension of $\LL$, and equality holds if and only if $\LL$ has a stochastic generating set. 
\end{theorem}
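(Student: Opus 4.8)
The plan is to reduce everything to the positive-orthant picture supplied by the basis $\{L_{ij}\}$ and then to keep careful track of the gap between real and complex dimensions. First I would note that $\LL^+_{GM}\subseteq \LL_{GM}^{\mathbb{R}}$, so every matrix in $\LL\cap \LL^+_{GM}$ is automatically real; hence $\LL^+=\LL_{\mathbb{R}}\cap \LL^+_{GM}$, where $\LL_{\mathbb{R}}:=\LL\cap \LL_{GM}^{\mathbb{R}}$ is the set of real matrices lying in $\LL$. Since $\LL$ is closed under complex linear combinations, $\LL_{\mathbb{R}}$ is closed under real ones, so it is a real linear subspace, hence a convex polyhedral cone; and $\LL^+_{GM}$ is a convex polyhedral cone by definition. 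By the fact recalled just before the theorem that the intersection of two convex polyhedral cones is again one, $\LL^+$ is a convex polyhedral cone. For strong convexity, if $Q$ and $-Q$ both lie in $\LL^+$, then writing $Q=\sum_{i\neq j}\alpha_{ij}L_{ij}$ forces $\alpha_{ij}\geq 0$ and $-\alpha_{ij}\geq 0$ simultaneously, so every $\alpha_{ij}=0$ and $Q=0$; thus $\LL^+$ contains no nonzero linear subspace.

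For the dimension inequality, the key elementary observation is that a family of \emph{real} matrices is linearly independent over $\mathbb{R}$ if and only if it is linearly independent over $\mathbb{C}$: splitting a complex dependence into its real and imaginary parts reduces $\mathbb{C}$-independence to $\mathbb{R}$-independence, and the converse is trivial. Applied to $\LL_{\mathbb{R}}$, which consists of real matrices, this yields $\dimR(\LL_{\mathbb{R}})\leq \dimC(\LL)$. Since $\LL^+\subseteq \LL_{\mathbb{R}}$ we have $\mathbb{R}\LL^+\subseteq \LL_{\mathbb{R}}$, and therefore $\dim(\LL^+)=\dimR(\mathbb{R}\LL^+)\leq \dimR(\LL_{\mathbb{R}})\leq \dimC(\LL)$.

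For the equality, in the ``if'' direction I would take a stochastic generating set and reduce it (as in the Remark) to a $\mathbb{C}$-basis $L_1,\ldots,L_r$ of $\LL$, where $r=\dimC(\LL)$; these lie in $\LL^+$ and, being $\mathbb{C}$-independent, are a fortiori $\mathbb{R}$-independent, so $\mathbb{R}\LL^+\supseteq \lrr{L_1,\ldots,L_r}$ has real dimension at least $r$, which together with the upper bound forces $\dim(\LL^+)=r$. Conversely, if $\dim(\LL^+)=\dimC(\LL)=r$, then $\LL^+$ spans a real space of dimension $r$, so I can select $r$ elements $M_1,\ldots,M_r\in\LL^+$ that are $\mathbb{R}$-independent; by the same real/complex equivalence they are $\mathbb{C}$-independent, hence a $\mathbb{C}$-basis of $\LL$, and each $M_k\in\LL^+_{GM}$ is a non-negative combination of the $L_{ij}$. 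Thus $\{M_1,\ldots,M_r\}$ is a stochastic basis, in particular a stochastic generating set.

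I expect the only genuinely delicate point to be this bookkeeping between $\dimR$ and $\dimC$; the temptation is to conflate the complex dimension of $\LL$ with the real dimension of its cone of stochastic elements, and the real/complex independence equivalence is exactly what bridges the two. Once that lemma is isolated, both the inequality and the equality characterization drop out, and the remaining steps are routine applications of the convex-cone facts recalled in the preliminaries.
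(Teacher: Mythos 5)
Your proof is correct and follows essentially the same route as the paper: intersection of two polyhedral cones for convexity, containment in the positive orthant $\LL^+_{GM}$ for strong convexity, and the observation that for real matrices $\mathbb{R}$-independence and $\mathbb{C}$-independence coincide (which the paper encodes as $\dimC(\mathbb{C}X)=\dimR(\mathbb{R}X)$ for a set $X$ of real vectors) to run the dimension comparison in both directions. Your explicit isolation of that real/complex independence lemma and the coefficient-wise argument for strong convexity are just slightly more detailed renderings of the same steps.
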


\begin{proof}
The set  $\LL^+$ is the intersection of two convex polyhedral cones, so it is also a convex polyhedral cone. 
Moreover, being contained in $\LL^+_{GM}$ it is clear  it contains no linear subspaces, so it is strongly convex, as required. %
Now, to show that the dimension of $\LL^+$ is less than or equal to the complex dimension of $\LL$, consider the vector space $\mathbb{C}\LL^+$ and observe it is a subspace: $\mathbb{C}\LL^+\subset \LL$. 
This implies $\dimC(\mathbb{C}\LL^+)\leq \dimC(\LL)$, and since $\LL^+$ contains only real vectors, we have $\dimC(\mathbb{C}\LL^+)=\dimR(\mathbb{R}\LL^+)=\dim(\LL^+)\leq \dimC(\LL)$, as required.
Now, assume  $\LL$ has a stochastic generating set $B_{\LL}$ so  $B_{\LL}\subset \LL^+$ and $\mathbb{C}B_{\LL}=\LL$.
As $B_{\LL}$ contains only real vectors, we have $\dimR(\mathbb{R}B_{\LL})=\dimC(\mathbb{C}B_{\LL})=\dimC(\LL)$;
and because $\LL^+$ contains only real vectors and $\LL^+\subset \LL$, we have $B_\LL\subset \LL^+\subset \mathbb{R}B_\LL$, so $\mathbb{R}B_\LL
=\mathbb{R}\LL^+$.
Together this implies $\dimR(\mathbb{R}B_\LL)=\dimR(\mathbb{R}\LL^+)=\dim(\LL^+)=\dimC(\LL)$.
Conversely, suppose  $\dimC(\LL)=\dim(\LL^+)$.
Take a generating set for $\mathbb{R}\LL^+$ composed of vectors in $\LL^+$; by removing vectors in this generating set, we can always assume they actually form a basis $B\subset \LL^+$ of $\mathbb{R}\LL^+$.
Now consider the vector subspace $\mathbb{C}B\subset \LL$ and observe  $\dimC(\mathbb{C}B)=\dimR(\mathbb{R}B)=\dimR(\mathbb{R}\LL^+)=\dim(\LL^+)=\dimC(\LL)$. Thus $\mathbb{C}B=\LL$, as required. \qed
\end{proof}

\begin{remark}
Assume $\LL$ is a Lie algebra without a stochastic basis and take $\LL'=\mathbb{C} \LL^+$ as the complex span of the  cone $\LL^+$. Since $\LL'$ is a complex vector space with a stochastic basis, the result above shows that its complex dimension equals the dimension of $\LL^+$, which is strictly smaller than the dimension of $\LL$. Moreover, notice that $\mathbb{R}\LL^+$  is closed under the Lie bracket, since $\LL$ is a Lie algebra and the Lie bracket of matrices with real entries still has real entries. 
Since $\LL'$ is generated by $\mathbb{R}\LL$, it follows that $\LL'$ is a complex Lie algebra. 
We conclude that if $\LL$ is a Lie algebra with no stochastic basis, then we can construct a strictly smaller Lie algebra $\LL'$ with a stochastic basis such that $(\LL')^+
=\LL^+$. 
This fact justifies Definition~\ref{def:LMM}. 
By requesting the Lie algebra to have a stochastic basis, we are considering the smallest Lie algebra that contains the cone $\LL^+$. 
Otherwise, we would be led to more than one Lie algebra giving rise to the same  cone $\LL^+$. \qed
\end{remark}

 
 \begin{definition}
 The \emph{dimension} of a Lie Markov model is the  dimension of $\mathfrak{L}$ as a complex vector space (which by virtue of Theorem~\ref{res:stochastic_cone} equals the dimension of $\LL^+$ as a cone). The \emph{stochastic cone of $\LL$} is the convex polyhedral cone $\LL^+$ and the \emph{rays of the model} are the rays of $\LL^+$.
 \end{definition}


\begin{remark}
 It is important to note that not every stochastic generating set of $\LL$ is a set of generators of the cone $\LL^+$.  
If this is the case and the set of generators is \emph{minimal}, the positive linear span of each generator is a ray of the cone. \qed
\end{remark}


\subsubsection*{Background on group representation theory}
In what follows we recall basic results from the representation theory of permutation groups $G\leq \SG_n$.
We recommend \cite{sagan2001} and \cite{james2001} as an excellent introductions to the required material.

A (linear) \emph{representation} of a group $G$ is a group homomorphism $\rho:G\rightarrow GL(V)\cong GL(m,\mathbb{C})$, where $V$ is a $\mathbb{C}$-vector space of dimension $m$.
In this situation, $\rho$ provides an \emph{action} of $G$ on $V$, and we say that $V$ forms a $G$-\emph{module}. 
A representation is said to be \emph{irreducible} if it does not contain any proper $G$-submodules.

Let $G\leq  \SG_n$ be a permutation group on $n$ elements.
Write $\{V_i\}_{i=1,\ldots,l}$ for the irreducible $G$-modules and $\rho_i:G\rightarrow GL(V_i)$ for the corresponding group homomorphism.
Since $G$ is finite, any representation $\rho:G\rightarrow GL(V)$ is completely reducible and there is a decomposition of the corresponding module $V$ into irreducible parts called \emph{isotypic components}, so we can write (Maschke's theorem):
\begin{eqnarray}\label{Masch}
V\cong \oplus_{i=1}^\ell  c_i V_i,
\end{eqnarray}
where the $c_i$ are non-negative integers specifying the number of copies of the module $V_i$ in the decomposition of $V$.

\begin{example}
The irreducible representations of $\SG_n$ are indexed by the integer partitions of $n$ \cite{sagan2001}. 
 The \emph{defining} representation of $\SG_n$ is defined on  the vector space $\mathbb{C}^n=\lrc{\{e_i\}_{1\leq i\leq n}}$ by $\sigma: e_i\mapsto e_{\sigma(i)}$.
It decomposes as $\{n\}\oplus \{n-1,1\}$, where $\{n\}$ is the (one-dimensional) trivial representation and $\{n-1,1\}$ has dimension $n-1$.
 \qed
 \end{example}
 
 \begin{example}\label{the_group}
  After identifying the nucleotides $A,G,C,T$ with the integers $1,2,3,4$, consider $\G$ as a subgroup of $\SG_4$:
\begin{eqnarray*}
\G:=\{e,(12),(34),(12)(34),(13)(24),(14)(23),(1324),(1423)\}.
\end{eqnarray*}
 The group $\G$ has 5 conjugacy classes:
%
\begin{eqnarray*}
 \left  [e \right ] & = & \{e\},\\
 \left [(12) \right ]& = &\{(12),(34)\}, \\
 \left [(12)(34)\right ]& = &\{(12)(34)\},  \\  
 \left [(13)(24)\right ]& = &\{(13)(24),(14)(23)\}, \\
 \left [(1324)\right ]& = &\{(1324),(1423)\}.   
    \end{eqnarray*}
Recall  the number of irreducible representations of a finite group is equal to the number of its conjugacy classes, and the sum of the dimension of each irreducible representation
squared is equal to the order of the group (see \cite{sagan2001} for example). 
We conclude  there are five irreducible representations of $\G$, which we denote as $\id$, $\sgn$ $\zeta_1$, $\zeta_2$, and $\xi$; with the corresponding character table presented as Table~\ref{tab: chartabZZ}. 
Notice the first row in the character table gives the dimension of each representation. 
Notice also  there are four one-dimensional representations, namely $\id$ (the trivial representation), $\sgn$ (each permutation $\sigma$ is mapped to $\sgn(\sigma)$), $\zeta_1$ and $\zeta_2$. Besides these, the representation $\xi$ is two-dimensional. 
The rows of Table \ref{tab: chartabZZ} represent the conjugacy classes of~${\G}$. 
\qed
\end{example}
 
Every irreducible module $V_i$ of $G$ has a \emph{projection operator} associated to it:
\begin{eqnarray}\label{proj_op}
 \Theta_{i}(\sigma):=\fra{1}{|G|}\sum_{\sigma \in G}\overline{\chi^{i}(\sigma)}\rho(\sigma),
\end{eqnarray}
where $\chi^i: G\rightarrow \mathbb{C}$ is the \emph{character} of the irreducible representation $\rho_i$ defined as $\chi^i(\sigma):=\text{tr}(\rho_i(\sigma))$, i.e. the trace of the representing matrix $\rho_i(\sigma)$. These operators project a given $G$-module $V$ onto its isotypic components, i.e. $\Theta_{i}(V)=c_i V_{i}$, 
so  they can be used to compute the $c_i$ as well as to identify generators of the components.

Of course, we can restrict $\rho$ to any subgroup  $H\leq G$, 
making a $H$-module of $V$. 
By virtue of Maschke's theorem, we can also decompose $V$ into the irreducible $H$-modules.
Recall an irreducible representation of $G$ does not necessarily stay irreducible when restricted to a subgroup $H$ of $G$. The \emph{branching rule } $G\downarrow H$ applies to describe the decomposition of the irreducible representations of $G$  in terms of the irreducible representations of $H$ (see Chap. 2.8, of \cite{sagan2001}).
By applying orthogonality in the character tables of $\SG_{4}$ and $\G$ (see Table \ref{tab: chartabZZ}), and concentrating on the conjugacy class $[(12)(34)]$ in $\SG_4$ compared to the same class in $\G$, it is straightforward to derive the group branching rules shown in Table \ref{tab: branching_rule}.

\subsubsection*{Background on discrete group actions}
Whenever a group $G$ acts on a finite set $B=\{b_1,\ldots,b_t\}$, there is a group homomorphism 
\begin{eqnarray}\label{act}
\rho:G\rightarrow \SG_t. 
\end{eqnarray}
A $G$-\emph{orbit} in $B$ is a subset $\mathcal{B}=\{b_{i_1},b_{i_2},\ldots , b_{i_{l}}\}\subset B$ which is invariant under $G$ and is  minimal.
That is 
\begin{eqnarray*}
\sigma\mathcal{B}:=\{b_{i_{\rho(\sigma)(1)}},b_{i_{\rho(\sigma)(2)}},\ldots , b_{i_{\rho(\sigma)(l)}}\}=\mathcal{B}, \mbox{ for all }\sigma\in G,
\end{eqnarray*}
and $\mathcal{B}$ contains no smaller subsets with this property.
From this, we can decompose $B$ as a disjoint union of $G$-orbits:
$B=\mathcal{B}_1\cup \mathcal{B}_2 \cup \ldots \cup \mathcal{B}_k$.

%
If we focus on each orbit, the \emph{orbit stabilizer theorem} (see \cite{bogopolski}, for example) states that, up to bijective correspondence, every $G$-orbit has the form of the quotient 
\begin{eqnarray*}
 G/H=\{[\sigma_1],\ldots,[\sigma_q]\}, \qquad [\sigma_i]:=\sigma
_i H,
\end{eqnarray*}
where  $H$ is a subgroup of $G$ and the $\sigma_i\in G$ are chosen so that the coset $\sigma_j H \neq \sigma_i H$ if $i\neq j$.
The group operation of $G$ induces an action in the finite set $G/H$ by 
$\sigma:\sigma_i H\mapsto (\sigma\sigma_i) H$.
Actually,  $H$ is the \emph{stabilizer} of some element $x\in \mathcal{B}$, $G_x:=\{g\in G:gx=x\}$. 
As $G_x\leq G$, and there are only finitely many subgroups of $G$, it is thus possible to give a complete list of $G$-orbits (up to isomorphism) by simply listing all quotients $G/H$ with $H\leq G$.
We recall we can turn the quotient $G/H$ into a $G$-module by considering the vector space generated by the cosets of $G/H$:
\[\lrc{G/H}=\lrc{[e], [\sigma_2], \ldots, [\sigma_q] }=\{v=c_1[e]+c_2[\sigma_2]+\ldots +c_q[\sigma_q]:c_i\in\mathbb{C}\},\] 
with the action
$\sigma:v=\sum c_i[\sigma_i] \mapsto v'=\sum c_i[\sigma\sigma_i]$.
%

Back to the general case, the action of $G$ on the set $\mathcal{B}$ induces a representation of $G$ in the vector space $\mathbb{C}\mathcal{B}$. 
We will refer to these as \emph{permutation representations} and they will play a key role throughout the paper. 
Notice  they decompose as 
\begin{eqnarray*}
 \mathbb{C}\mathcal{B}\cong \oplus_{i=1}^k \langle G/H_i \rangle_{\CC}
\end{eqnarray*}
where $H_i< G$ is the stabilizer of some element in the orbit $\mathcal{B}_i$. 
The reader should note this decomposition is not the decomposition into irreducible representations of (\ref{Masch}). In fact it is possible to show that any permutation representation is actually reducible.

%

\begin{table}[t]
\caption{\label{tab: chartabZZ} \small{The character tables of $\mathfrak{S}_4$ and $\G=\left\{e,(12),(34),(12)(34),(13)(24),(14)(23),(1324),(1423)\right\}$. The rows are labelled by the conjugacy classes and the columns are labelled by the irreducible characters. The character table of a group plays a key role to obtain the decomposition of any representation of that group into its irreducible representations (see (\ref{Masch})).}}
\vspace*{2mm}
\begin{tabular}{ccc}
\begin{tabular}{cccccc}
\hline\noalign{\smallskip}
$\mathfrak{S}_4$ & $\{4\}$  & $\{31\}$ & $\{2^2\}$ & $\{21^2\}$ & $\{1^4\}$ \\
\hline
$e$ & 1 & 3 & 2 & 3 & 1 \\
$[(12)]$ & 1  & 1 & 0 & -1\hspace{.35em} & -1\hspace{.35em}\\
$[(123)]$ & 1 & 0 & -1\hspace{.35em} & 0 & 1  \\
$[(12)(34)]$ & 1 & -1\hspace{.35em} & 2 & -1\hspace{.35em} & 1 \\
$[(1234)]$ & 1  & -1\hspace{.35em} & 0 & 1 & -1\hspace{.35em}\\
\noalign{\smallskip}\hline
\end{tabular}
& \qquad & 
\begin{tabular}{cccccc}
\hline\noalign{\smallskip}
$\G$ & $\id$  & $\sgn$ & $\zeta_1$ & $\zeta_2$ & $\xi$ \\
\hline
$e$ & 1 & 1 & 1 & 1 & 2 \\
$[(12)]$ & 1  & -1 & -1 & 1 & 0 \\
$[(12)(34)]$ & 1 & 1 & 1 & 1 &-2 \\
$[(13)(24)]$ & 1  & 1 & -1 & -1 & 0 \\
$[(1324)]$ & 1  & -1 &  1 & -1 & 0 \\
\noalign{\smallskip}\hline
\end{tabular}
\end{tabular}
\end{table}

\section{Lie Markov models with prescribed symmetry}\label{sec3}

In \cite{LMM}, we learnt  the search for Lie Markov models is significantly simplified by demanding the models to have some non-trivial symmetry since this reduces a potential infinity of models to just a number of special cases. The idea is to rely on imposing symmetry to assist in the search for Lie Markov models.
An alternative strategy would be to enumerate all possible Lie Markov models and toss out those without the desired symmetry. However unless the number of states is equal to 2 or 3, this approach is computationally infeasible.
Thus, we are led to deal with the technicalities of this section in order to refine our search of the Lie Markov models with some prescribed symmetry. 
Of course, it is expected that the larger the symmetry we demand, the easier the analysis will be.  

To this aim, recall  the symmetric group $\mathfrak{S}_n$ has an action on $\mathfrak{L}_{GM}$ defined on the elementary rate matrices as 
$\rho(\sigma) \cdot L_{ij} := L_{\sigma(i)\sigma(j)}$ for all $\sigma\in \mathfrak{S}_n$,
and extended to all of $\mathfrak{L}_{GM}$ by linearity. 
Equivalently, the action can be defined by  \begin{eqnarray}\label{action_GM}
 Q=\sum_{i\neq j}\alpha_{ij}L_{ij}\mapsto \sigma\cdot Q := K_{\sigma} Q K_{\sigma}^{-1}=\sum_{i\neq j}\alpha_{ij}L_{\sigma(i)\sigma(j)},
\end{eqnarray}
where $K_{\sigma}$ is the \emph{permutation matrix} associated to $\sigma$. 

\begin{definition}[cf. \cite{LMM}]\label{def:symm}
We say  a Lie Markov model $\mathfrak{L}$  has the \emph{symmetry} of the group $G\leq\SG_n$ if there is a basis $B_{\mathfrak{L}}$ of $\LL$ invariant under the  action of $G$ induced by (\ref{action_GM}), that is, a basis $B_{\mathfrak{L}}=\{L_1,L_2,\ldots, L_d\}$ such that 
\begin{eqnarray}
\sigma\cdot B_{\mathfrak{L}}:=\left\{K_{\sigma}L_{1}K^{-1}_{\sigma},K_{\sigma}L_{2}K^{-1}_{\sigma},\ldots ,K_{\sigma}L_{d}K^{-1}_{\sigma}\right\}=B_{\mathfrak{L}},\quad\forall \sigma\in G.\nonumber
\end{eqnarray} 
In this case, we will say that $B_{\LL}$ is a \emph{permutation basis} of $\LL$.
\end{definition}

Notice  a Lie Markov model $\LL$ has the symmetry of $G$ if and only if there is a permutation representation of $G$ on $\LL$, so  we have a decomposition $\LL\cong \oplus_{i=1}^k \langle G/H_i\rangle_{\CC}$. A permutation basis for $\LL$ is then obtained by collecting a permutation basis $B_i$ for each $\langle G/H_i\rangle_{\CC}$ and putting them together.

%
\begin{remark}
 Notice  if $\mathfrak{L}$ has the symmetry of a permutation group $G$, then it also has 
 the symmetry of any subgroup $H\leq G$. \qed
\end{remark}
%
%

The reader is referred to \cite{LMM} for the statistical motivations for this definition.
In a nutshell, parameter estimation under such a model is invariant under nucleotide permutations belonging to $G$.
In particular, we have a group homomorphism 
\begin{eqnarray}\label{eq:homomorphism_basis}
 \rho:G \rightarrow \mathfrak{S}_d,
\end{eqnarray}
where the image of any permutation $\sigma\in G$ is determined by the equality $K_{\sigma}L_{i}K^{-1}_{\sigma}=L_{\rho(\sigma)(i)}$.
Thus, for any rate matrix $Q=\sum_{i=1}^d\alpha_iL_i\in \mathfrak{L}$, we have 
\begin{eqnarray}
\sigma: Q=\sum_{i=1}^d\alpha_iL_i\mapsto \sum_{i=1}^d\alpha_iL_{\rho(\sigma)(i)}=\sum_{i=1}^d\alpha_{\rho(\sigma^{-1}(i))}L_{i},\nonumber
\end{eqnarray}
so $G$ acts by permuting the model parameters, ie. $\alpha_i\mapsto \alpha_{\rho(\sigma^{-1}(i))}$, and hence leaves maximum likelihood estimates invariant.
%

\begin{example}\label{S4-LMM}
\cite{LMM}
The list of 4-state Lie Markov models with $\SG_4$ symmetry is: 
\begin{enumerate}
\item Jukes-Cantor model, with dimension 1 \cite{JC69};
\item Kimura model, with dimension 3 \cite{kimura1981};
\item Felsenstein model, with dimension 4 \cite{felsenstein1981};
\item Kimura+Felsenstein model or ``K3ST+F81'', with dimension 6 (see \cite{sumner2012}, and Example~\ref{Ex:F81+K3ST} below);
\item General Markov model, with dimension 12.
\end{enumerate}
\end{example}

Presently, we recall the general procedure to obtain Lie Markov models with prescribed symmetry.
Suppose we have a Lie Markov algebra $\mathfrak{L}$ with dimension $d$ and a permutation group $G\leq \SG_n$. 
We demand that $\mathfrak{L}$ satisfies the conditions of Definition~\ref{def:symm} for the permutation group $G$. 
Then, $\LL$ is provided with a basis $B_{\LL}$ which is invariant under $G$. 
As explained above, we have a decomposition of $B_{\LL}$ into $G$-orbits. 
We can then compare the irreducible $G$-modules that occur in the decomposition of $\mathfrak{L}_{GM}$ to those that occur in the decomposition of $\lrc{G/H}$ for each $H\leq G$.
Finally, we can attempt to construct subalgebras $\mathfrak{L}\subset \mathfrak{L}_{GM}$ with a basis $B_\mathfrak{L}$ such that $B_\mathfrak{L}
=\mathcal{B}_1\cup \mathcal{B}_2 \cup \ldots \cup \mathcal{B}_r$ is a plausible union of orbits $\mathcal{B}_i$ consistent with the linear decomposition of $\mathfrak{L}_{GM}$ induced by the action of $G$.

\begin{table}[t]
\caption{\label{tab: branching_rule} \small{The branching rule of $\SG_4\downarrow \G$  describes the decomposition of the irreducible representations of $\SG_4$ when restricted to the subgroup $\G$. For example, $\{2^2\} \mapsto \id+\sgn$ means that, when restricted to $\G$, the irreducible representation $\{2^2\}$ of $\SG_4$ decomposes as one copy of the \emph{identity} representation of $\G$, and one copy of the \emph{sign} representation of $\G$.}}
\vspace*{2mm}
\centering
\begin{tabular}{ll}
\hline\noalign{\smallskip}
 & $\{4\}  \mapsto \id$ \\
& $\{1^4\} \mapsto \sgn$ \\
$\SG_4 \downarrow {\G}$ :  & $\{31\}  \mapsto  \xi+\zeta_2$ \\
& $\{2^2\}  \mapsto  \id+\sgn$ \\
& $\{21^2\}  \mapsto  \xi+\zeta_1$ \\ 
\hline\noalign{\smallskip}
\end{tabular}
\end{table}

The general procedure is: 
\begin{enumerate}
\item Decompose the Lie algebra of the GM model into irreducible modules of $G$:
\begin{eqnarray}\label{dec:L_GM}
\mathfrak{L}_{GM}=\oplus_{k}f_kV_{k}, 
\end{eqnarray}
where $k$ labels the irreducible $G$-module $V_{k}$ and the $f_k$ are non-negative integers specifying the number of copies  of each irreducible module in the decomposition.

\item Apply the orbit stabilizer theorem and construct the list of $G$-orbits, $G/H$, by working through the subgroups $H\leq G$. 
For each subgroup $H$, extend the orbits linearly over $\mathbb{C}$ to the $G$-module $\lrc{G/H}$ and decompose this space into irreducible $G$-modules: 
\begin{eqnarray*}
 \lrc{G/H}\cong \oplus_{k}b^{H}_{k}V_{k},
\end{eqnarray*}
where again the $b^{H}_{k}$ are non-negative integers.

\item Working up in dimension $d$, consider all unions of $G$-orbits
$S=\bigcup_{i=1}^q(G/H_i)$
such that $\left|S\right|=\sum_{1\leq i\leq q}\left|G/H_i\right|= d$ (where $|\cdot|$ stands for cardinality).
For each $S$, consider its linear decomposition into irreducible $G$-modules  
\begin{eqnarray*}
\lrc{S}\cong\oplus_k a_k V_{k} 
\end{eqnarray*}
where $a_k:=b^{H_1}_{k}+b^{H_2}_{k}+\ldots +b^{H_q}_{k}$, and, in order to exclude unions of $G$-orbits that do not occur in the linear decomposition of $\mathfrak{L}_{GM}$ as a $G$-module, check  $a_k\leq f_k$, for each $k$. 
 
\item For each case thus identified, consider the vector space $\mathfrak{L}=\oplus_{k}a_{k} V_{k}$ and use explicit computation to check whether $\mathfrak{L}$ forms a Lie algebra.
If so, attempt to show  it has a stochastic basis. 
\end{enumerate}

This procedure is guaranteed to produce all Lie Markov models with symmetry $G$. 
In Section~\ref{sec5}, we will give a complete presentation of the 4-state models with purine/pyrimidine symmetry. 

\begin{remark}
 In our procedure we first look for all possible decompositions into irreducible modules for a permutation representations and  we investigate how these decompositions are realised into Lie subalgebras of $\LL_{GM}$. A different approach would be to deal first with possible Lie subalgebras of $\LL_{GM}$ (up to isomorphism) and then, for each isomorphism class, look for possible subalgebras which are permutation representations of $G$. Our experience tells us  this second part is rather unfeasible and, in the last section, we adopt the procedure just explained. \qed
\end{remark}

\begin{remark}
\emph{Equivariant} models were first introduced in  \cite{draisma2008} and have been studied in \cite{casfer2010}. 
In \cite{LMM} we modified slightly the definition of equivariant models to adapt it to the continuous-time Markov model setting. 
Under this definition, equivariant models appear as a particular case of Lie Markov models. 
Actually, the $G$-equivariant model is the Lie Markov model with $G$ symmetry obtained when we take $\LL$ to be the isotypic component of $\LL_{GM}$  associated to the \emph{trivial} or \emph{identity} representation $\id$ (which maps each permutation to the identity map): $\LL=f_{\id} V_{\id}$   (see (\ref{dec:L_GM})).
For example, the $\SG_4$-equivariant model is the Lie Markov model with symmetry $\SG_{4}$ and decomposition $\LL\cong \id$: it is the Jukes-Cantor model \cite{JC69}. 
In a similar way, we will recover the Kimura model with two parameters \cite{kimura1980} as the Lie Markov model with symmetry $\G$ and  decomposition $\LL \cong 2 \id$ (see Model 2.2b in Section~\ref{sec5}). \qed
\end{remark}

\subsection*{\emph{The stochastic cone of a Lie Markov model}}

We want to explore the geometry of the stochastic cone associated to a Lie Markov model with symmetry given by some permutation group $G\leq \SG_n$. 
Since the action of $G$ on $ \mathfrak{L}_{GM}$ is as given in (\ref{action_GM}), we infer that the space
 $ \LL_{GM}^+$ is invariant under this action, i.e. 
$ G \mathfrak{L}_{GM}^+= \mathfrak{L}_{GM}^+$.
 From this, we conclude that if $\mathfrak{L}\subset \mathfrak{L}_{GM}$ is a vector subspace which is invariant under the action of $G$, then the stochastic cone 
$ \mathfrak{L}^+=\mathfrak{L}\cap \mathfrak{L}^+_{GM}$
is invariant under $G$ as well. 

Because each permutation in $G$ induces a linear automorphism in $\mathfrak{L}_{GM}$ and the cone $\mathfrak{L}^+$ is invariant, the set of rays of the cone must also be invariant under the action of $G$. 
We infer that, after giving an ordering to the set of  rays, there is a group homomorphism
\begin{eqnarray}\label{homomorphism}
 G\rightarrow \mathfrak{S}_r,
\end{eqnarray}
where $r$ is the number of rays of $\mathfrak{L}^+$. 
From this, we can decompose the set of rays of $\LL^+$ into $G$-orbits, which we will refer to as \emph{ray-orbits}. 
Notice in general, the above homomorphism is different from the homomorphism arising from a permutation basis, as described in (\ref{eq:homomorphism_basis}).

\begin{example}
The number of rays of $\LL_{GM}^+$ is $n(n-1)$. 
These rays are exactly the positive span of the elementary rate matrices $L
_{ij}$. 
The group homomorphism  $G\rightarrow \mathfrak{S}_r$ of (\ref{homomorphism}) corresponds to the action described in (\ref{action_GM}).\qed
\end{example}

%
%

\begin{example}\label{Ex:F81+K3ST}
In Result 17 of \cite{LMM}, we learnt there is only one six-dimensional Lie Markov model with $\SG_4$ symmetry. 
The Lie algebra  $\LL$ is the vector space sum of the Kimura 3ST and Felsenstein 1981 models. It is generated by 
\begin{eqnarray*}
 W_{ij}=L_{s(ij)}+(R_i+R_j), \qquad i<j, \quad i,j \in \{1,2,3,4\}, 
\end{eqnarray*}
where $R_i=\sum_{j\neq i}L_{ij}$ and $L_{s(ij)}=L_{ij}+L_{ji}+L_{kl}+L_{lk}$ with $i,j,k,l$ all different. 
The reader may notice, although the 6 vectors $W_{ij}$ do form a permutation basis of $\LL$, by taking the convex cone generated by them, 
$ \{\sum \lambda_{ij} W_{ij} \mid \lambda_{ij}\geq 0 \}$,
we are not considering all the stochastic rate matrices in the model.
For example, the vector $R_1$ is in the stochastic cone $\LL^+$ but we cannot obtain it as a positive linear combination of the vectors $W_{ij}$.

The reader may argue this situation occurs because of our particular choice of a permutation basis, but  this will be the case no matter the permutation basis of $\LL$ we consider. 
Actually, the stochastic cone $\LL^+$ has seven rays $\{L_{\alpha},L_{\beta},L_{\gamma},R_1,R_2,R_3,R_4 \}$ (with the notation used there: $L_{\alpha}=L_{s(12)},L_{\beta}=L_{s(13)},L_{\gamma}=L_{s(12)}$). 
 We will find this model again in Section~\ref{sec5} of this paper as Model 6.7a. \qed
 \end{example}

\section{Decomposition of $\LL_{GM}$ as a $\G$-module }\label{sec4}

As we are especially interested in nucleotide evolution, we fix $n\!=\!4$ and deal with the group of permutations that preserves the partitioning of nucleotides into purines and pyrimidines: $AG|CT:=\{\{A,G\},\{C,T\}\}$.

By identifying nucleotides $\{A,G,C,T\}$ with numbers $\{1,2,3,4\}$, this leads to consider the subgroup  $\G$ of $\SG_4$ presented in Example \ref{the_group}:
\begin{eqnarray*}
\G:=\{e,(12),(34),(12)(34),(13)(24),(14)(23),(1324),(1423)\}.
\end{eqnarray*}
Of course, we expect to recover the Kimura model with 3 parameters in the list of Lie Markov models with this symmetry: it is Model 3.3a. However, as already noted in Remark \ref{S4-LMM}, this model has a wider symmetry and in fact, it is  $\mathfrak{S}_4$-symmetric (see \cite{LMM}). 

Presently, we use the projection operators to decompose the Lie algebra of the general Markov model into the irreducible representations of $\G$.

\begin{remark}
 The reader may notice the irreducible characters of the group $\G$ in Table~\ref{tab: chartabZZ} take only real values. As a consequence, irreducible real representations remain irreducible over the complex field and all the representation theory for $\G$ can be dealt over the real field. However, we prefer to keep our study over the complex as this is the field where the general theory is developed. For instance, it is important to work over the complex field when computing the full list of $\G$-submodules of $\LL_{GM}$ isomorphic to $\oplus_k a_k V_k$ (see the step 4 of the procedure of Section 3): to this aim, we apply that the only $\G$-endomorphisms of an irreducible module are of the form $\lambda \1$ (Schur's lemma), which is known to be false if the field is not algebraically closed. \qed
\end{remark}

From now on, we will consider the restriction of the action of $\SG_4$ described in (\ref{action_GM}) to the group ${\G}$. 
We will denote this action by $\rho_{\G}$:
\begin{eqnarray}\label{action_G}
 \rho_{\G}(\sigma): Q\mapsto K_{\sigma} Q K_{\sigma}^{-1}.
\end{eqnarray}
%
In Result 8 of \cite{LMM}, we learnt  the decomposition of  the $\LL_{GM}$ into the irreducible representations of $\SG_4$ (expressed using integer partitions of 4) is 
$\LL_{GM}\cong \{4\}\oplus 2\{31\} \oplus \{2^2\}  \oplus \{21^2\}$.
%
By applying the branching rule of $\SG_4$ to $\G$ (see Table \ref{tab: branching_rule}) we obtain:

\begin{theorem}\label{dec_GM}
The decomposition of the 4-state general rate matrix model $\mathfrak{L}_{GM}$ into irreducible representations of ${\G}$ is given by 
\begin{eqnarray}\label{eq:schurdecompG}
\mathfrak{L}_{GM}\cong 2\, \id\oplus  \sgn \oplus \zeta_1 \oplus 2\, \zeta_2 \oplus 3\, \xi,
\end{eqnarray}
where the decomposition of the dimension is given by 
$12=2\times 1+1+1+2\times 1 +3\times 2$.
\end{theorem}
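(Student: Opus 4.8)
The plan is to verify the claimed decomposition by the standard character-theoretic computation, which is completely mechanical once the character of $\LL_{GM}$ as a $\G$-module is known. First I would compute the character $\chi_{GM}$ of the representation $\rho_{\G}$ on $\LL_{GM}$ directly. Since $\LL_{GM}$ has the basis $\{L_{ij}\}_{i\neq j}$ and the action is $\rho_{\G}(\sigma): L_{ij}\mapsto L_{\sigma(i)\sigma(j)}$, the character value $\chi_{GM}(\sigma)$ is simply the number of ordered pairs $(i,j)$ with $i\neq j$ that are fixed by $\sigma$, i.e. the number of pairs with $\sigma(i)=i$ and $\sigma(j)=j$. If $\sigma$ fixes $m$ of the four points, this count is $m(m-1)$. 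Evaluating on the five conjugacy-class representatives of $\G$ (see Example~\ref{the_group}) gives $\chi_{GM}(e)=12$, $\chi_{GM}((12))=0$ (two fixed points contribute $2\cdot 1=2$; wait — $(12)$ fixes $3,4$, so $m=2$ and the value is $2$), and $0$ for each of the three fixed-point-free classes $(12)(34)$, $(13)(24)$, $(1324)$. Thus the character vector across the classes $[e],[(12)],[(12)(34)],[(13)(24)],[(1324)]$ is $(12,2,0,0,0)$.

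Next I would extract the multiplicities $c_k$ by the orthogonality formula $c_k=\frac{1}{|G|}\sum_{\sigma}\overline{\chi^k(\sigma)}\,\chi_{GM}(\sigma)$, weighting each conjugacy class by its size. Using the class sizes $1,2,1,2,2$ (which sum to $|\G|=8$) and the character table (Table~\ref{tab: chartabZZ}), each inner product reduces to $\frac{1}{8}\bigl(12\cdot\chi^k(e)\cdot 1 + 2\cdot\chi^k((12))\cdot 2\bigr)$, since $\chi_{GM}$ vanishes on the remaining three classes. For $\id$ this gives $\frac{1}{8}(12+2\cdot 1\cdot 2)=\frac{16}{8}=2$; for $\sgn$, $\frac{1}{8}(12+2(-1)2)=\frac{8}{8}=1$; for $\zeta_1$, $\frac{1}{8}(12+2(-1)2)=1$; for $\zeta_2$, $\frac{1}{8}(12+2\cdot 1\cdot 2)=2$; and for the two-dimensional $\xi$, $\frac{1}{8}(12\cdot 2+2\cdot 0\cdot 2)=\frac{24}{8}=3$. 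These are exactly the coefficients $2,1,1,2,3$ asserted in~\eqref{eq:schurdecompG}, and the dimension check $2\cdot 1+1\cdot 1+1\cdot 1+2\cdot 1+3\cdot 2=12$ confirms consistency.

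Alternatively, as the excerpt suggests, I would derive the same result without recomputing from scratch by composing the known $\SG_4$-decomposition $\LL_{GM}\cong\{4\}\oplus 2\{31\}\oplus\{2^2\}\oplus\{21^2\}$ (Result~8 of \cite{LMM}) with the branching rule $\SG_4\downarrow\G$ of Table~\ref{tab: branching_rule}. Substituting $\{4\}\mapsto\id$, $\{31\}\mapsto\xi+\zeta_2$ (taken twice), $\{2^2\}\mapsto\id+\sgn$, and $\{21^2\}\mapsto\xi+\zeta_1$ and collecting like terms yields $\id$ from $\{4\}$ plus $\id$ from $\{2^2\}$ for a total of $2\,\id$; one $\sgn$ from $\{2^2\}$; one $\zeta_1$ from $\{21^2\}$; two $\zeta_2$ from $2\{31\}$; and $\xi$ appearing twice from $2\{31\}$ and once from $\{21^2\}$ for $3\,\xi$. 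This reproduces~\eqref{eq:schurdecompG} and serves as an independent cross-check.

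The computation is entirely routine, so there is no genuine obstacle; the only point requiring care is the bookkeeping of conjugacy-class sizes in the orthogonality sums (the classes $[(12)]$, $[(13)(24)]$, and $[(1324)]$ each have two elements, while $[e]$ and $[(12)(34)]$ are singletons), since an error there would throw off every multiplicity. Having both the direct character calculation and the branching-rule derivation agree is the cleanest way to be confident the answer is correct.
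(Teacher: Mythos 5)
Your proposal is correct. The paper's own proof is precisely your second route: it quotes the $\SG_4$-decomposition $\LL_{GM}\cong\{4\}\oplus 2\{31\}\oplus\{2^2\}\oplus\{21^2\}$ from Result 8 of \cite{LMM} and pushes it through the branching rule of Table~\ref{tab: branching_rule}, with no further computation. What you add is a self-contained, first-principles verification: since $\rho_{\G}$ permutes the basis $\{L_{ij}\}_{i\neq j}$, the character counts fixed ordered pairs, giving $m(m-1)$ for a permutation with $m$ fixed points, hence the character vector $(12,2,0,0,0)$ on the classes $[e],[(12)],[(12)(34)],[(13)(24)],[(1324)]$; the orthogonality relations with the class sizes $1,2,1,2,2$ and Table~\ref{tab: chartabZZ} then yield the multiplicities $2,1,1,2,3$ directly. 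This buys independence from the externally cited Result 8 and simultaneously serves as a consistency check on the branching rule, whereas the paper's route buys brevity. One presentational slip to clean up: you first write $\chi_{GM}((12))=0$ and only correct it to $2$ inside the same parenthesis; the value actually fed into the orthogonality sums is the correct one, so all multiplicities come out right, but the final write-up should state the fixed-point count of the class $[(12)]$ correctly from the start rather than recording the false start.
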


\subsection*{\emph{Decomposition of the orbits of $\G$ in $\LL_{GM}$}}

Following the general scheme described in Section~\ref{sec3}, our task now is to identify the Lie Markov models occurring as subalgebras of $\LL_{GM}$ and with symmetry ${\G}$.
In Table~\ref{tab:Z2models} we present the decomposition of the orbits of ${\G}$.
These are computed by using the orbit stabilizer theorem and projecting $\lrc{{\G}/H}$ onto the irreducible modules $V_i$ of ${\G}$ using the projection operators $\Theta_i$ defined in (\ref{proj_op}).

\begin{example}\rm 
Here we develop the case of $\HH=\{e,(12)(34)\}$ as an illustrative example. 
We have ${\G} \, / \, \HH=\left\{ [e],[(12)],[(13)(24)],[(1324)] \right \}$,
where $[\sigma]$ represents the coset in ${\G}/\HH$ containing the element $\sigma$. Namely, 
${[e]}= \{e,(12)(34)\}$, ${[(12)]}= \{(12),(34)\}$, 
${[(13)(24)]} = \{(13)(24),(14)(23)\}$ and ${[(1324)]} = \{(1324),(1423)\}$.
These cosets inherit an action of ${\G}$ by taking $\sigma: [\sigma']\mapsto [\sigma\sigma']$, which can be extended linearly to a linear representation of ${\G}$ by taking the module 
$\lrc{{\G}\, / \,\HH}\cong \CC^4.$
Next, we decompose $\lrc{{\G}\, / \,\HH}$ into irreducible modules of ${\G}$ by applying the projection operators: $\Theta_{\id}$, $\Theta_{\sgn}$, $\Theta_{\zeta_1}$, $\Theta_{\zeta_2}$ and $\Theta_{\xi}$.
For example:
\begin{eqnarray*}
\Theta_{\id}[e]= \fra{1}{8}\sum_{\sigma\in {\G}}\sigma\cdot [e]
=\fra{1}{4}\left([e]+[(12)]+[(13)(24)]+[(1324)]\right).
\end{eqnarray*}
As this projection is non-zero, we conclude  $\lrc{{\G}/\HH}$ contains the trivial representation $\id$. 
We can check that the image by $\Theta_{\id}$ of the other coset elements gives the same projection, 
so $\lrc{{\G}/\HH}$ contains $\id$ only \emph{once}. 
Similarly, referring to the character table of $\SG_{4}$ (see Table~\ref{tab: chartabZZ}), we have
\begin{eqnarray*}
\Theta_{\sgn}[e]&= & \fra{1}{8}\sum_{\sigma\in {\G}}\chi^{\sgn}(\sigma) \sigma\cdot [e] = \fra{1}{4}\left([e]-[(12)]+[(13)(24)]-[(1324)]\right ),
\end{eqnarray*}
and we check that $\Theta_{\sgn}[e]=\Theta_{\sgn}[(12)]=\Theta_{\sgn}[(13)(24)]=\Theta_{\sgn}[(1324)]$ to learn that $\lrc{{\G}/\HH}$ does contain a copy of the  $\sgn$ representation.
Similarly, we check  $\langle {\G}/\HH \rangle_{\mathbb{C}}$ contains a copy of $\zeta_1$ and $\zeta_2$ representations. 
On the other hand, we see 
\begin{eqnarray*}
\Theta_{\xi}[e]&= & \fra{1}{8}\sum_{\sigma\in {\G}}\chi^{\xi}(\sigma)\cdot [e] =\fra{1}{4}\left([e]-[(12)(34)]\right)=0,
\end{eqnarray*} 
and we check  $\Theta_{\xi}[(12)]=\Theta_{\xi}[(13)(24)]=\Theta
_{\xi}[(1324)]=0$ to learn  $\lrc{{\G}/\HH}$ does not contain a copy of the  $\xi$ representation.
%
Putting this together and counting dimensions, we infer that
$\lrc{{\G}/\HH}\cong \id \oplus \sgn \oplus \zeta_1 \oplus \zeta_2.
$\qed
\end{example}

Proceeding as in this example, we have produced the results summarised in Table~\ref{tab:Z2models}.
It gives the decomposition of $\lrc{{\G}/H}$ into irreducible representations for each subgroup $H\leq {\G}$.
The first column shows how many copies of each $H$ occur  as a subgroup in ${\G}$, with automorphism classes accounted for with distinct decomposition in the fourth column.
For example, there are three automorphism classes classes of  $\mathbb{Z}_2$ in ${\G}$:
$\{e,(12)\}   \cong  \{e,(34)\}$, $ \{e,(12)(34)\}$ and $\{e,(13)(24)\}  \cong  \{e,(14)(23)\}$, and the corresponding spaces $\lrc{{\G}/H}$ have different decomposition into irreducible modules, as shown in Table~\ref{tab:Z2models}.
Similarly, there are two ``types'' of $\SG_2\times \SG_2$:
$\{e,(12),(34),(12)(34)\}$ and $\{e,(12)(34),(13)(24),(14)(23)\}$.
Again, these two types have differing decomposition into irreducible subspaces. 

Finally, Table \ref{tab:all_decs} shows all possible decompositions for a $\G$-invariant subspace of $\LL_{GM}$ allowed by the decomposition of $\LL_{GM}$ of Theorem \ref{dec_GM}.  The list is obtained by adding decompositions of $\G$-orbits (see Table \ref{tab:Z2models}) as long as they are allowed by the decomposition of $\LL_{GM}$ as a $\G$-module (see Theorem \ref{dec_GM}). 
Note  the decomposition (\ref{eq:schurdecompG}) of $\mathfrak{L}_{GM}$ has two copies of the trivial representation while the decomposition of each ${\G}/H$ has only one copy.

Referring to Table~\ref{tab:all_decs}, we conclude:

\begin{theorem}
There are no Lie Markov models with purine/pyrimidine symmetry of dimension seven or eleven.
\end{theorem}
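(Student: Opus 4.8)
The plan is to exploit the structural constraint that, by Definition~\ref{def:symm} and the discussion following it, any Lie Markov model $\LL$ with $\G$ symmetry carries a permutation basis and hence decomposes as a direct sum of orbit modules $\LL \cong \bigoplus_{i=1}^{q}\lrc{\G/H_i}$ with each $H_i\leq\G$. Since the dimension of the model is $\sum_{i} |\G/H_i|$, the question of which dimensions occur reduces to asking which totals can be assembled from orbit sizes, subject to the constraint (imposed by Theorem~\ref{dec_GM}) that the resulting module embeds in $\LL_{GM}$.

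First I would record the decisive numerical fact: every transitive permutation module $\lrc{\G/H}$ contains the trivial representation $\id$ exactly once, because the multiplicity of $\id$ equals the number of $\G$-orbits on $\G/H$, which is $1$. Consequently a sum of $q$ orbit modules contains exactly $q$ copies of $\id$. As Theorem~\ref{dec_GM} gives $\LL_{GM}\cong 2\,\id\oplus\sgn\oplus\zeta_1\oplus 2\,\zeta_2\oplus 3\,\xi$, with only $f_{\id}=2$ copies of the trivial representation available, any model that embeds in $\LL_{GM}$ can be built from at most two orbits, i.e. $q\leq 2$.

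Next I would note that each orbit size $|\G/H_i|$ is the index of a subgroup of $\G$, hence a divisor of $|\G|=8$, so lies in $\{1,2,4,8\}$ (and all four indices are realised, as the orbit computations summarised in Table~\ref{tab:Z2models} show). It then remains to enumerate the attainable dimensions. A single orbit yields dimension in $\{1,2,4,8\}$; a pair of orbits yields a sum $a+b$ with $a,b\in\{1,2,4,8\}$ and $a+b\leq\dimC(\LL_{GM})=12$, namely $\{2,3,4,5,6,8,9,10,12\}$. The union of these sets is $\{1,2,3,4,5,6,8,9,10,12\}$, which omits precisely $7$ and $11$; equivalently, one reads the same conclusion directly off the exhaustive list in Table~\ref{tab:all_decs}. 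Since the dimension of any such model lies between $1$ and $12$ and can never equal $7$ or $11$, the claim follows.

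The part requiring care is the reduction to $q\leq 2$ orbits: this hinges on the trivial-representation count, and one must be sure the multiplicity bound $a_k\leq f_k$ of the search procedure is being applied to the isotypic component of $\id$ rather than merely to dimensions. Once that is secured, the remaining enumeration is a finite check, and the dimension ceiling $12$ (which discards the combination $8+8$) completes the argument; no appeal to the Lie-algebra or stochastic-basis conditions is needed, since the obstruction is already visible at the level of $\G$-module structure.
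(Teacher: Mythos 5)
Your proposal is correct and follows essentially the same route as the paper: the paper's justification is precisely the observation that $\LL_{GM}$ contains only two copies of $\id$ while each transitive module $\lrc{\G/H}$ contains exactly one, so at most two orbits can be combined, and the attainable dimensions (read off from Table~\ref{tab:all_decs}) omit $7$ and $11$. Your version merely streamlines the enumeration by replacing the full table with the Lagrange constraint that orbit sizes lie in $\{1,2,4,8\}$ together with the ceiling $\dimC(\LL_{GM})=12$, which is a tidy but equivalent packaging of the same argument.
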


\begin{remark}\label{ray_orbits}
Being a $G$-orbit, we can consider the abstract vector space generated by any ray-orbit $B=\{Q_1,\ldots,Q_r\}$, that is, 
$\{\sum_{i=1}^r a_i [ Q_i ] : a_i\in \CC  \}$,
where the notation $[Q_i]$ is used to emphasise the fact that we are avoiding any reference to matrix addition between the elements of the ray-orbit. The dimension of this vector space equals the number of elements in the orbit, and as a permutation representation, the decomposition into irreducible representations will be one of the decompositions shown in Table \ref{tab:Z2models}. On the other hand, we can also consider the vector subspace  of $\LL_{GM}$ spanned by the matrices $Q_1,\ldots,Q_r$. Notice that these matrices may not be linearly indepenedent as vectors of $\LL_{GM}$ and the dimension of this vector subspace will be smaller than the number of them. In this case, this vector space is not a permutation representation and its decomposition  into irreducible representations does not appear in Table  \ref{tab:Z2models}. 
For an example of this, the reader is referred to ray-orbits $\trip{4}{\frac{1}{3}}{\frac{2}{3}}d, \trip{4}{\frac{1}{3}}{\frac{2}{3}}e,\trip{4}{\frac{1}{3}}{\frac{2}{3}}f$ presented in Table~\ref{tab:ray_orbits}. \qed
\end{remark}

\begin{table}[t]
\newcolumntype{C}{ >{\centering\arraybackslash} c}
\small
\caption{\label{tab:Z2models} \small{Decomposition of the orbits of $\G$ into irreducible modules. }}
\vspace*{2mm}
\centering
{\renewcommand{\arraystretch}{2}
\begin{tabular}{|C|c|c|}
\hline
{automorphism classes  of $H\leq {\G}$} & $ \frac{|{\G}|}{|H|}$ & Decomp. of $\lrc{{\G} / H}$ \\
\hline \hline
$\{e\}$  & 8 & $(1): \id \oplus \sgn \oplus \zeta_1 \oplus \zeta_2  \oplus 2 \xi$  \\  \hline
$\SG_2\cong \{e,(12)\}  \cong \{e, (34)\} $ & 4 & $(2): \id  \oplus \zeta_2 \oplus \xi$  \\  \hline
$\SG_2   \cong   \{e,(14)(23)\}\cong  \{e,(13)(24)\} $  & 4 & $(3): \id  \oplus \sgn \oplus \xi$  \\  \hline
$\SG_2 \cong \{e,(12)(34)\}$  & 4 & $(4): \id  \oplus \sgn \oplus \zeta_1 \oplus \zeta_2$  \\  \hline
$\mathbb{Z}_4\cong \{e,(1324),(12)(34),(1423)\}$  & 2 & $(5): \id \oplus \zeta_1$
\\   \hline
$\SG_2\times \SG_2  \cong \{e, (12),(34),(12)(34)\} $  & 2 & $(6): \id \oplus \zeta_2 $  \\  \hline
$\SG_2\times \SG_2 \cong \{e,(12)(34),(13)(24),(14)(23)\} $  & 2 & $(7): \id \oplus \sgn $  
\\   \hline
${\G}$  & 1 & $(8): \id$ 
\\ \hline
\end{tabular}}
\end{table}

\subsection*{\emph{A convenient basis}}

In this section we derive a basis for the vector space of $4\times 4$ rate matrices $\LL_{GM}$ where the matrices comprising the basis are organised naturally into subsets that span each of the irreducible components of the decomposition of $\LL_{GM}$ with respect to $\G$ (as given in Theorem~\ref{dec_GM}).
This basis is presented in Theorem~\ref{specialbasis} below.
The reader should note the basis vectors we derive play the role the $L_{ij}$ did when we considered models with $\mathfrak{S}_4$ symmetry in \cite{LMM}. 

\begin{table}[t]
\caption{\label{tab:all_decs} \small{Decompositions into irreducible modules of all possible $\G$-permutation subrepresentations of $\LL_{GM}\cong 2 \id \oplus \sgn \oplus \zeta_1\oplus 2 \zeta_2 \oplus 3 \xi$ (see Theorem \ref{dec_GM}). }}
\vspace*{2mm}
\footnotesize
\centering
{\renewcommand{\arraystretch}{1.2}
\begin{tabular}{|c|p{1cm}|c|}
\hline 
Dim.  & Orbits &  Decomp. into irreps.\\
\hline \hline
1 & (8) & $\id$ \\ \hline
2 & 2(8)& $ 2\id $\\
 &  (7) &$ \id \oplus \sgn $\\
 &  (5) &$ \id \oplus \zeta_1 $\\
 &  (6) &$ \id \oplus \zeta_2 $\\\hline
3 &  (7)+(8) &$ 2\id \oplus \sgn $\\
 & (5)+(8) &$ 2\id \oplus \zeta_1 $\\
 & (6)+(8) &$ 2\id \oplus \zeta_2 $\\\hline
4 &  (2)&$ \id \oplus \zeta_2 \oplus \xi $\\
&  (5)+(7) &$2 \id \oplus \sgn \oplus \zeta_1$ \\
 &  (4) &$ \id \oplus \sgn \oplus \zeta_1 \oplus \zeta_2 $\\
 &  (3) &$ \id \oplus \sgn \oplus \xi $\\
 &  2(6) &$ 2\id \oplus 2\zeta_2 $\\
 & (6)+(7) &$ 2\id \oplus \sgn \oplus \zeta_2 $\\
 & (5)+(6) &$ 2\id \oplus \zeta_1 \oplus \zeta_2 $\\\hline
\end{tabular}
\qquad 
\begin{tabular}{|c|p{1cm}|c|}
\hline
Dim.& Orbits &  Decomp.into irreps.\\
\hline \hline
5 & (3)+(8)  &$ 2\id \oplus \sgn \oplus \xi $\\
 &  (4)+(8)&$ 2\id \oplus \sgn \oplus \zeta_1 \oplus \zeta_2 $\\
 & (2)+(8) &$ 2\id \oplus \zeta_2 \oplus \xi $\\\hline
6 & (5)+(3) &$ 2\id \oplus \sgn \oplus \zeta_1 \oplus \xi $\\
 & (4)+(6) &$ 2\id \oplus \sgn \oplus \zeta_1 \oplus 2\zeta_2 $\\
 & (2)+(7) &$ 2\id \oplus \sgn \oplus \zeta_2 \oplus \xi $\\
 & (2)+(5) &$ 2\id \oplus \zeta_1 \oplus \zeta_2 \oplus \xi $\\
 & (2)+(6) &$ 2\id \oplus 2\zeta_2 \oplus \xi $\\\hline
8 & (1) &$\id  \oplus \sgn  \oplus \zeta_1 \oplus \zeta_2 \oplus 2\xi$ \\
& 2(2) &$ 2\id \oplus 2\zeta_2 \oplus 2 \xi $\\
 & (2)+(4) &$ 2\id \oplus \sgn \oplus \zeta_1 \oplus 2\zeta_2 \oplus \xi $\\
 & (2)+(3) &$ 2\id \oplus \sgn \oplus \zeta_2 \oplus 2 \xi $\\\hline
9 & (1)+(8) &$ 2\id \oplus \sgn \oplus \zeta_1 \oplus \zeta_2 \oplus 2 \xi $\\\hline
10 & (1)+(6)  &$ 2\id \oplus \sgn \oplus \zeta_1 \oplus 2\zeta_2 \oplus 2 \xi $\\\hline
12 & (1)+(2) &$ 2\id \oplus \sgn \oplus \zeta_1 \oplus 2\zeta_2 \oplus 3 \xi $\\ \hline
\end {tabular}
}
\end{table}
\normalsize

\paragraph*{Permutation vectors}  
 \quad \newline 
For each $\sigma\in \G, \sigma\neq e$, a \emph{permutation vector} is defined as 
\begin{eqnarray*}
\P_{\sigma}=-\1+K_{\sigma}=\sum_{1\leq j\leq 4} L_{j\sigma(j)}.
\end{eqnarray*}
%
 Notice that each $\P_{\sigma}$ is a rate matrix in $\LL_{GM}$. 
The linear span of these vectors has dimension 5 because of the linear dependencies 
$\P_{(12)}+\P_{(34)}= \P_{(12)(34)}$, and $ \P_{(1324)}+\P_{(1423)}=\P_{(13)(24)}+\P_{(14)(23)}$.
%
 %
Moreover, the permutation vectors span a Lie algebra (Proposition 4.12 of \cite{LMM}):
\begin{eqnarray}
\lie{\P_\sigma}{\P_{\sigma'}}=\lie{-\1+K_{\sigma}}{-\1+K_{\sigma'}}=\lie{K_{\sigma}}{K_{\sigma'}}=K_{\sigma\sigma'}-K_{\sigma'\sigma}=\P_{\sigma\sigma'}-\P_{\sigma'\sigma}.\nonumber
\end{eqnarray}
The permutation vectors are useful because they provide simple expressions of generators of $\LL_{GM}$ consistent with the decomposition of Theorem \ref{dec_GM}. %
The action $\rho_{\G}$ of ${\G}$ on these permutation vectors is given by $\tau: \P_{\sigma} \mapsto K_{\tau} \P_{\sigma} K_{\tau}^{-1}= \P_{\tau \sigma \tau^{-1}}$.  
Notice this action maps each matrix $\P_{\sigma}$ to $\P_{\sigma'}$, where $\sigma'$ is some permutation in the conjugacy class of $\sigma$.
It follows that the vectors $\{L_{\sigma}: \sigma'\in [\sigma]\}$ span a $\G$-module, and by applying character theory we can obtain the decomposition of these $\G$-modules into isotypic components. Moreover, a basis for these $G$-modules consistent with these decompositions can be described with the assistance of the projection 
operators. The following example illustrates this procedure.

\begin{example}
Consider the 2-dimensional subspace $S=\langle L_{(12)}, L_{(34)}\rangle_{\mathbb{C}}$ corresponding to the conjugacy class $[(12)]=\{(12),(34)\}$, and the representation $\rho_{\G}:\G\rightarrow GL(S)$ induced by the action just defined. It is straightforward to check $\tau (12) \tau^{-1}=(12)$, $\tau (34) \tau^{-1}=(34)$ if $\tau\in \{e,(12),(12)(34)\}$, while 
  $\tau (12) \tau^{-1}=(34)$, $\tau (34) \tau^{-1}=(12)$ if $\tau\in \{(13)(24),(1324)\}$. Adopting matrix notation, we obtain 
  \begin{eqnarray*}
   \rho_\G(e)=   \rho_\G\left ( (12) \right )=    \rho_\G\left ((12)(34)\right )=\left ( \begin{array}{cc} 1 & 0 \\ 0 & 1 \end{array} \right) \qquad \mbox{ and } \qquad 
   \rho_\G\left ((13)(24) \right )=
   \rho_\G\left ((1324) \right )=\left ( \begin{array}{cc} 1 & 0 \\ 0 & 1 \end{array} \right). 
  \end{eqnarray*}
 If $\chi$ denotes the character associated with $\rho_G$, we infer 
 \begin{eqnarray*}
 \chi(e)=\chi \left ( (12) \right )=    \chi \left ((12)(34)\right )=2, \qquad \mbox{ and } \qquad 
   \chi\left ((13)(24) \right )=
   \chi\left ((1324) \right )=0.
 \end{eqnarray*}
 By virtue of the character table of $\G$ (see Table \ref{tab: chartabZZ}), we infer $S\cong \id \oplus \xi_2$, and applying the projection operators (see (\ref{proj_op})):
 \begin{eqnarray*}
  \Theta_{\id}(L_{(12)})=\Theta_{\id}(L_{(34)})=\frac{1}{2}(L_{(12)}+L_{(34)});\\
  \Theta_{\xi_2}(L_{(12)})=\Theta_{\xi_2}(L_{(34)})=\frac{1}{2}(L_{(12)}-L_{(34)}).
 \end{eqnarray*}\qed 
 \end{example}

%
Proceeding in this way for each conjugacy class of $\G$ (excluding the trivial class), we identify the following $\G$-modules and decompositions: 
\begin{center}
\begin{tabular}{ll}
  $\lrc{ \P_{(12)},\P_{(34)} }\cong \id \oplus \zeta_2$, & \qquad
 $\lrc{  \P_{(12)(34)} } \cong  \id$, \\
 $\lrc{  \P_{(13)(24)},\P_{(14)(23)} } \cong \id \oplus \sgn$,  & \qquad
 $\lrc{  \P_{(1324)},\P_{(1423)} } \cong  \id \oplus \zeta_1$.
\end{tabular}
\end{center}
%
%
For future convenience, we keep the vectors obtained by applying the projection operators to these decompositions. From now on, we will use the following notation

\begin{center}
\begin{tabular}{lll}
$\B^{\id}_1 = \P_{(12)(34)}$, & $\B^{\id}_2  =  \P_{(13)(24)}+\P_{(14)(23)}$, & $\B^{\sgn}  =  \P_{(13)(24)}-\P_{(14)(23)}$,\\
$\B^{\zeta_1}  =  \P_{(1423)}- \P_{(1324)}$ & $\B^{\zeta_2}_1
=  \P_{(12)}- \P_{(34)}$; 
\end{tabular}
\end{center}
where the superscript indicates which irreducible $\G$-module each vector belongs to.


%


\paragraph*{Cherry vectors} 
\quad \newline 
Referring to Table~\ref{tab:Z2models} and the permutation representation spanned by the ``cherries'' $\{1,2\}$ and $\{3,4\}$, we introduce the matrices
\begin{eqnarray*}
\Q{12}=L_{13}+L_{14}+L_{23}+L_{24},\\
\Q{34}=L_{31}+L_{32}+L_{41}+L_{42},
\end{eqnarray*}
and obtain $ \lrc{ \Q{12},\Q{34} } \cong \id \oplus \zeta_2$.
The action of ${\G}$ on each of these vectors is given by  
$\tau: \Q{ij} \mapsto \Q{\tau(i)\tau(j)}$.
Notice that $\Q{12}+\Q{34}=\B^{\id}_2$. %
By applying the projection operator $\Theta_{\zeta_2}$, we see that $\B^{\zeta_2}_2:=\Q{12}-\Q{34}$ accounts for the second copy of $\zeta_2$. 
%



\paragraph*{Row-sum and twisted vectors}
\quad \newline 
Keeping the notation of \cite{LMM}, define the \emph{row-sum} vectors
\begin{eqnarray}
R_{i}:=\sum_{{j: 1\leq i\neq j\leq 4}}L_{ij}.\nonumber
\end{eqnarray}
The action $\rho_{\G}$ of ${\G}$ on each of these is
$\sigma:R_{i} \mapsto R_{\sigma(i)}$,
and it is isomorphic to the restriction of the defining representation of $\SG_4$ to ${\G}$.
Therefore, the (invariant) subspace generated by the row-sum vectors  is isomorphic to $\id\oplus\{31\}$, restricted to the subgroup ${\G}$. 
By applying the branching rule $\SG_4\downarrow \G$ given in Table \ref{tab: branching_rule}, we obtain 
\begin{eqnarray*}\label{eq:Rdecomp}
\lrc{ R_1,R_2,R_3,R_4}\cong \id \oplus \zeta_2 \oplus \xi.
\end{eqnarray*}
Obviously, we have $R_1+R_{2}+R_{3}+R_{4}=B^{\id}_1+B^{\id}_2$ and 
$( R_1+R_2)-(R_3+R_4) = \B^{\zeta_2}_1+\B^{\zeta_2}_2$. By applying the projection operator $\Theta_{\xi}$ we find that $\langle R
_1-R_2,R_3-R_4 \rangle_{\CC}$ accounts for a copy of the $\xi$ representation. We define 
\begin{eqnarray*}
 \B^{\xi}_1 = R_1-R_2, \qquad 
 \B^{\xi}_2 = R_3-R_4.
\end{eqnarray*}


Next, define the \emph{twisted vectors} as
\begin{eqnarray*}
 H_i & := & L_{ik}+L_{il}+L_{ji}, \\ V_i & := & L_{ki}+L_{li}+L_{ij},
\end{eqnarray*}
where $\{\{i,j\},\{k,l\}\}=\{\{1,2\},\{3,4\}\}$. 
For example, $V_2=L_{21}+L_{32}+L_{42}$ and $H_3=L_{31}+L_{32}+L_{43}$. 
%
The action $\rho_{\G}$ of ${\G}$ on these vectors  is given by 
$\sigma: V_i  \mapsto  V_{\sigma(i)}$ and 
$\sigma: H_i  \mapsto  H_{\sigma(i)}$,
again we have with the restriction of the defining representation of $\SG_{4}$ to ${\G}$ . 
As above, 
\begin{eqnarray*}
 \langle V_1,V_2,V_3,V_4 \rangle_{\CC}  \cong  \langle H_1,H_2,H_3,H_4 \rangle_{\CC} \cong \id\oplus \zeta_2 \oplus \xi.
\end{eqnarray*}
Notice that $\sum_{ i} H_{i}=\sum_{ i} V_{i}=\B^{\id}_1+\B^{\id}_2$,
$(H_{1}+H_{2})-(H_{3}+H_{4})=\B^{\zeta_2}_{1}+\B^{\zeta_2}
_{2}$ and 
$(V_{1}+V_{2})-(V_{3}+V_{4})=\B^{\zeta_2}_{1}-\B^{\zeta_2}
_{2}$.
By applying the projection operator $\Theta_{\xi}$ in $ \langle V_1,V_2,V_3,V_4 \rangle_{\CC} $ and $ \langle H_1,H_2,H_3,H_4 \rangle_{\CC}$, we find that 
$ \langle H_1-H_2,H_3-H_4  \rangle_{\CC}$ and $\langle V_1-V_2,V_3-V_4  \rangle_{\CC}$ account for the two other copies of $\xi$, so we define 
\begin{eqnarray*}
\begin{array}{lll}
 \B^{\xi}_3 =  H_1-H_2, &  \qquad  \qquad & 
  \B^{\xi}_5 = V_1-V_2,\\
 \B^{\xi}_4 = H_3-H_4, &  \qquad \qquad & 
 \B^{\xi}_6 = V_3-V_4.
\end{array}
\end{eqnarray*}


Putting all of these results together:
\begin{theorem}\label{specialbasis}
The Lie algebra $\mathfrak{L}_{GM}$ can be expressed as
\begin{eqnarray*}
\mathfrak{L}_{GM}& = & \langle\{L_{ij}\}_{1\leq i\neq j\leq 4}\rangle_{\CC}\\
& = & \langle \{\P_{\sigma}\}_{\sigma\in {\G}, \sigma\neq e} \cup \{\Q{12},\Q{34}\}\cup \{R_i\}_{1\leq i\leq 4}\cup\{H_j\}_{1\leq j\leq 4}\cup \{V_k\}_{1\leq k \leq 4}  \rangle_{\CC}, 
\end{eqnarray*}
with linear dependencies
\begin{eqnarray*}
\P_{(12)}+\P_{(34)}=\P_{(12)(34)},\\
\P_{(13)(24)}+\P_{(14)(23)} = \P_{(1324)}+\P_{(1423)}=\Q{12}+\Q{34}, \\
H_{1}+H_{2}=R_1+R_2=\Q{12}+\P_{(12)},\\
H_{3}+H_{4}=R_3+R_4=\Q{34}+\P_{(34)},\\
V_1+V_2=\Q{34}+\P_{(12)},\\
V_3+V_4=\Q{12}+\P_{(34)}.
\end{eqnarray*}
A basis for $\LL_{GM}$ consistent with the decomposition of Theorem \ref{dec_GM} is given by 
\begin{center}
\begin{tabular}{lll}
$ \B^{\id}_1 = \P_{(12)(34)}$, &  \qquad &$\B^{\xi}_1 =R_1-R_2$, \\
$\B^{\id}_2 = \P_{(13)(24)}+\P_{(14)(23)}$, & \qquad &  $\B^{\xi}_2 = R_3-R_4$, \\
$\B^{\sgn} =  \P_{(13)(24)}-\P_{(14)(23)}$, &  \qquad & $ \B^{\xi}_3 = H_1-H_2$, \\
$\B^{\zeta_1} =  \P_{(1324)}-\P_{(1423)}$,  &  \qquad &  $\B^{\xi}
_4 = H_3-H_4$, \\
$\B^{\zeta_2}_1 = \P_{(12)}-\P_{(34)}$,  & \qquad &   $\B^{\xi}_5 = V_1-V_2$, \\
$\B^{\zeta_2}_2 = \Q{12}-\Q{34}$,  &  \qquad &  $\B^{\xi}_6 = V_3-V_4$; \\
\end{tabular}
\end{center}
where $\lrc{ \B^{\xi}_1,\B^{\xi}_2} $, $\lrc{ \B^{\xi}_3,\B^{\xi}_4} $ and $\lrc{ \B^{\xi}_5,\B^{\xi}_6} $ are the three copies of $\xi$ in $\LL_{GM}$. 
With respect to this basis,  the Lie algebra structure of $\LL_{GM}$ is summarised in Table~\ref{tab:Liebracket}.
\end{theorem}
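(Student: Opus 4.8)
The plan is to read the statement as an assembly of the projection-operator computations carried out just above, and to break the verification into three parts: that every listed matrix lies in $\LL_{GM}$, that the quoted linear dependencies hold, and that the twelve vectors $\B^{\bullet}$ form a basis compatible with the decomposition of Theorem~\ref{dec_GM}. The first part is immediate: each of $\P_{\sigma}$, $\Q{12}$, $\Q{34}$, $R_i$, $H_j$, $V_k$ was defined as an explicit integer combination of the $L_{ij}$, and $\LL_{GM}=\lrc{\{L_{ij}\}_{i\neq j}}$ is a vector space. For the dependencies I would simply expand both sides in the $L_{ij}$ basis (using the convention $L_{jj}=0$); for instance $\P_{(12)}+\P_{(34)}=(L_{12}+L_{21})+(L_{34}+L_{43})=\P_{(12)(34)}$, and the remaining eight relations fall out the same way.

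The core of the argument is to show that the twelve vectors $\B^{\id}_1,\B^{\id}_2,\B^{\sgn},\B^{\zeta_1},\B^{\zeta_2}_1,\B^{\zeta_2}_2,\B^{\xi}_1,\ldots,\B^{\xi}_6$ are linearly independent. Here I would exploit the fact that each was produced by applying a projection operator $\Theta_i$ and therefore lies in the isotypic component of $\LL_{GM}$ indicated by its superscript. Since the isotypic components attached to distinct irreducibles intersect only in $0$, independence across components is automatic, so it suffices to check independence within each component. For the $\id$ (two vectors), $\zeta_2$ (two vectors) and one-dimensional $\sgn$ and $\zeta_1$ pieces this is transparent from the $L_{ij}$-expansions, whose supports are disjoint; these six vectors then exactly fill the $2\,\id\oplus\sgn\oplus\zeta_1\oplus 2\,\zeta_2$ summand of dimension $6$.

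The hard part will be the $\xi$-isotypic component, which by Theorem~\ref{dec_GM} has dimension $6=3\times 2$ and must be spanned by $\B^{\xi}_1,\ldots,\B^{\xi}_6$. Each pair $\lrc{\B^{\xi}_1,\B^{\xi}_2}$, $\lrc{\B^{\xi}_3,\B^{\xi}_4}$, $\lrc{\B^{\xi}_5,\B^{\xi}_6}$ is $\G$-invariant and isomorphic to $\xi$, which follows from the actions $\sigma\colon R_i\mapsto R_{\sigma(i)}$, $\sigma\colon H_i\mapsto H_{\sigma(i)}$, $\sigma\colon V_i\mapsto V_{\sigma(i)}$ recorded above together with the $\Theta_{\xi}$ computations; the genuinely nontrivial point is that these three copies sit in direct sum. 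I would settle this by expanding all six $\B^{\xi}$ in a fixed ordering of the $L_{ij}$ and checking that the resulting $6\times 12$ coefficient matrix has rank $6$, e.g. by exhibiting a nonvanishing $6\times 6$ minor. Since inside a $6$-dimensional space independence is equivalent to spanning, this simultaneously identifies $\lrc{\B^{\xi}_1,\ldots,\B^{\xi}_6}$ with the full $\xi$-component and confirms the final assertion about the three copies of $\xi$.

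Once the twelve $\B^{\bullet}$ are known to be a basis, the spanning statement follows: each $\B^{\bullet}$ lies in the span of $\{\P_{\sigma}\}\cup\{\Q{12},\Q{34}\}\cup\{R_i\}\cup\{H_j\}\cup\{V_k\}$, so that generating set already spans $\LL_{GM}$; as it consists of $21$ vectors, the nine quoted relations account for the entire $21-12=9$ dimensional space of linear dependencies among them. Finally, I would fill in the Lie-bracket table by direct computation, using $\lie{\P_{\sigma}}{\P_{\sigma'}}=\P_{\sigma\sigma'}-\P_{\sigma'\sigma}$ for the brackets of permutation vectors and, for the rest, expanding in the $L_{ij}$ basis and re-expressing each bracket in the $\B^{\bullet}$ basis; this is routine but lengthy, and I expect it to be the most labour-intensive (even if not the most conceptually difficult) step.
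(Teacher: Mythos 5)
Your proposal is correct and follows essentially the same route as the paper: the paper's ``proof'' is precisely the accumulation of the preceding projection-operator computations on the permutation, cherry, row-sum and twisted vectors, together with the expansion of the dependencies in the $L_{ij}$ basis and a dimension count against Theorem~\ref{dec_GM}. The only point where you are more explicit than the paper is the rank check certifying that the three copies of $\xi$ coming from the $R$'s, $H$'s and $V$'s are in direct sum, which the paper leaves implicit in the phrase ``account for the two other copies of $\xi$''; that is a welcome, but not a different, argument.
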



\section{The list of Lie Markov models with purine/pyrimidine symmetry}\label{sec5}

We proceed to give the list of Lie Markov models with  purine/pyrimidine symmetry, working up in dimension $d\leq 12$. 
For each $d$, Table \ref{tab:all_decs} lists all the possible decompositions allowed by Theorem \ref{dec_GM}. 
For each decomposition, all possible complex Lie subalgebras $\LL$ of $\LL
_{GM}$  are obtained by direct computation using code written by the authors and implemented in the open-source mathematical software SAGE \cite{SAGE}  (this code is available online at the website \cite{LMMweb}).
For each Lie algebra, we then impose that it has a stochastic basis  (see Definition  \ref{def:LMM}). 
%
%
Since a matrix  $\B^{\id}_{a,b}=a \B^{\id}_1+b \B^{\id}_2$, with $a,b>0$, has all its non-diagonal entries positive, the reader can notice that the above condition is guaranteed if $\LL$ contains such a matrix for, in this case, if $\{B_1,\ldots,B_t\}$ is a basis for $\LL$, then a stochastic basis for $\LL$ is given by $\{B_1+\lambda \B^{\id}_{a,b},\ldots, B_t+\lambda \B^{\id}_{a,b}\}$ as long as $\lambda>0$ is large enough.
%
%

For each model in the list, we describe a basis for the Lie algebra in terms of the vectors introduced in the Section~\ref{sec4} and  the rays of the stochastic cone arranged in orbits (see Table \ref{tab:ray_orbits}). 
Both data are required to completely describe the model. 
The general form of the stochastic rate matrix, as well as a permutation basis (a basis invariant under the action of $\G$), is also shown when it is not too  complicated. 
In particular, stochastic rate matrices are presented as linear combinations of the rays with non-negative coefficients. 
Since the rays are the generators of the stochastic cone, every stochastic rate matrix in the model can be expressed in this way (the reader should notice that in general, we cannot write down all the stochastic rate matrices of a model in terms of the same permutation basis if we require the coefficients to be non-negative).  
The name of each model has the form ``$d.r$'', where $d$ is the dimension of the model and $r$ is the number of rays of the corresponding stochastic cone (in particular, $d \leq r$). %
In case there is more than one model with a given dimension and number of rays, we will differentiate them by using letters: for example, $5.7a$, $5.7b$ and so on.

\begin{note}
Throughout the following list, we adopt the  notation $X_{ij}^{+}=X_i+X_j$ and $X_{ij}^{-}=X_i-X_j$, for $i,j\in \{1,2,3,4\}$ and $X\in \{R,H,V\}$.\qed
\end{note}

\paragraph*{Ray-orbits}
\quad \newline 
The rays of the stochastic cones of the forthcoming models appear in orbits of cardinality 1, 2, 4 and 8 (as is demanded by the orbit-stablizer theorem) that we call \emph{ray-orbits}. 
A system of generators for any of these ray-orbits is obtained as the $\G$-orbit of a rate matrix $Q$ in any of the rays of the family. Notice incidentally the action of $\G$ preserves the total sums of transition rates and of transversions rates of the rate matrices within the $\G$-orbit. 
%
%
%
%

For the Lie Markov models with symmetry $\G$, we explicitly describe the rays of the corresponding stochastic cone arranged in ray-orbits. 
In order to denote and compare these ray-orbits in a convenient fashion, we first  \emph{normalize} the generators of the rays, and take rate matrices whose trace is equal to $-1$ (recall the trace of the rate matrix can be understood as the expected number of changes  in one unit of time under the Markov process). Then, taking into account that the sum of transition rates and of transversion rates is constant, each ray-orbit is referred to 
as ``$\trip{r}{\frac{s}{s+v}}{\frac{v}{s+v}}$'', where 
\begin{description}
\item $r$ is the number of rays in the orbit: 1, 2, 4 or 8;
\item $s$ is the sum of the transition rates in (any matrix of) the orbit;

\item $v$ is the sum of the transversion rates in (any matrix of) the orbit.  
\end{description}
The reader is referred to Table \ref{tab:ray_orbits} for the whole list of ray-orbits arising in Lie Markov models with $\G$-symmetry.

\begin{note}
 From now on, we write 
 $\B^{\id}=\B^{\id}_{1}+\B^{\id}_{2}$, 
  $\B^{\zeta_2}=\B^{\zeta_2}_{1}+\B^{\zeta_2}_{2}$.\qed
\end{note}

\vspace*{4mm}  
\noindent
\textsc{Dimension One}

\vspace*{3mm}
\noindent
From Table~\ref{tab:Z2models} we see that there is only one abstract orbit of ${\G}$ with cardinality one, and it has decomposition $\id$.

\vspace*{2mm}

\subsection*{$\id$}
The general Markov model contains two copies of the trivial representation, so we can consider the subspace generated by any linear combination 
$a \B^{\id}_1+b \B^{\id}_2$. Moreover, since $ [\B^{\id}_1,\B^{\id}_2]=0$, we see the subspace generated by any $a \B^{\id}_1+b \B^{\id}_2$, is a Lie algebra for any fixed choice $a,b\in \CC$. When we request these spaces to have a stochastic basis, we have to restrict to the condition $a,b\geq 0$. Therefore, we conclude:
\begin{theorem}\label{1_dim}
In the 4-state case, there is a continuum of one-dimensional Lie Markov models with ${\G}$ symmetry and decomposition $\id$. Each model in the family has the form
\begin{eqnarray*}
\LL=\langle B^{\id}_{a,b} \rangle_{\CC} \quad a,b\geq 0,
\end{eqnarray*}
where $a+b=1, a,b\geq 0$ and 
\begin{eqnarray}
B^{\id}_{a,b}:=a \B^{\id}_1+b\B^{\id}_2=
\left(
\begin{array}{cccc}
	*  & a & b & b \\
	a & * & b & b\\
	b & b & * & a  \\
	b & b & a & *
\end{array}\right).\nonumber
\end{eqnarray}
where we use $\ast$ to indicate the diagonal entry needed for the column to sum to zero. 
\end{theorem}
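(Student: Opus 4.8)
The plan is to establish three things for the family $\LL = \lr{B^{\id}_{a,b}}_\CC$ with $a,b \geq 0$: that each such $\LL$ is a genuine Lie Markov model with $\G$-symmetry and decomposition $\id$, that the displayed matrix form of $B^{\id}_{a,b}$ is correct, and that these exhaust all one-dimensional models with this symmetry type. First I would verify the matrix expression. Since $\B^{\id}_1 = \P_{(12)(34)} = -\1 + K_{(12)(34)}$ and $\B^{\id}_2 = \P_{(13)(24)}+\P_{(14)(23)} = -2\cdot\1 + K_{(13)(24)}+K_{(14)(23)}$, I would compute $a\B^{\id}_1 + b\B^{\id}_2$ directly by writing out the permutation matrices $K_{(12)(34)}$, $K_{(13)(24)}$, $K_{(14)(23)}$ and reading off entries; the off-diagonal $(i,j)$ entry collects the contribution $a$ when $(12)(34)$ sends $j\mapsto i$ and $b$ when either $(13)(24)$ or $(14)(23)$ does. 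This is a routine check that reproduces the displayed matrix, with the diagonal $\ast$ forced by the column-sum-zero condition inherited from each permutation vector lying in $\LL_{GM}$.

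Next I would confirm the Lie algebra and symmetry conditions. Any one-dimensional subspace is trivially closed under the bracket since $[\lambda X, \mu X] = 0$, but the cleaner observation (already supplied in the excerpt) is that $[\B^{\id}_1, \B^{\id}_2] = 0$, so even the two-dimensional span is abelian; the one-dimensional subspace $\lr{B^{\id}_{a,b}}_\CC$ is therefore a Lie subalgebra for every fixed choice of $a,b$. For the $\G$-symmetry, I would invoke that $B^{\id}_{a,b}$ spans a copy of the trivial representation $\id$: since $\B^{\id}_1$ and $\B^{\id}_2$ were constructed (via the projection operator $\Theta_{\id}$ applied to the permutation vectors) to lie in the isotypic component for $\id$, every element of $\G$ fixes $B^{\id}_{a,b}$ under $\rho_\G$, so the singleton basis is trivially a permutation basis in the sense of Definition~\ref{def:symm}.

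The stochastic-basis requirement of Definition~\ref{def:LMM} is where the restriction $a,b\geq 0$ enters, and I expect this to be the main conceptual point rather than an obstacle. By Theorem~\ref{res:stochastic_cone}, $\LL$ is a Lie Markov model exactly when $\dim(\LL^+)$ equals $\dimC(\LL)=1$, i.e. when $\LL^+$ contains a nonzero vector. Reading off the matrix, $B^{\id}_{a,b}$ has all off-diagonal entries equal to either $a$ or $b$, so it is a non-negative combination of the elementary matrices $L_{ij}$ precisely when $a,b\geq 0$; conversely, if either $a<0$ or $b<0$ then no positive scalar multiple of $B^{\id}_{a,b}$ can be stochastic and $\LL^+=\{0\}$, so $\LL$ fails to have a stochastic basis. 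This pins down the parameter range and, after normalising to $a+b=1$ by the freedom to rescale a one-dimensional model, gives the stated continuum.

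Finally, for completeness (exhaustiveness), I would appeal to Table~\ref{tab:all_decs}, which lists the \emph{unique} one-dimensional decomposition as orbit $(8)$ with decomposition $\id$, realised by the single trivial-representation orbit $\G/\G$ of cardinality one identified in Table~\ref{tab:Z2models}. Since $\LL_{GM}$ contains exactly two copies of $\id$ (Theorem~\ref{dec_GM}), any one-dimensional $\G$-invariant Lie subalgebra with decomposition $\id$ must be spanned by some vector in the two-dimensional isotypic component $\lr{\B^{\id}_1, \B^{\id}_2}_\CC$, i.e. exactly a vector of the form $a\B^{\id}_1 + b\B^{\id}_2$; imposing the stochastic condition then recovers the family above and shows there are no others. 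The only subtlety to flag is that the family is genuinely a \emph{continuum} of distinct models rather than a single model, since different ratios $a:b$ give non-proportional rate matrices and hence different one-dimensional subalgebras.
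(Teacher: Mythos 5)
Your proposal is correct and follows essentially the same route as the paper's (much terser) argument: any one-dimensional $\G$-module of type $\id$ must sit inside the two-dimensional isotypic component $\lrc{\B^{\id}_1,\B^{\id}_2}$, the vanishing bracket $[\B^{\id}_1,\B^{\id}_2]=0$ makes every such line a Lie subalgebra, and the stochastic-basis requirement pins down $a,b\geq 0$ up to the normalisation $a+b=1$. The only nitpick is your claim that $a<0$ or $b<0$ forces $\LL^+=\{0\}$ — when both are negative the subspace coincides with one already in the family — but since subspaces are determined only by the ratio $[a:b]$, this does not affect the conclusion.
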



\begin{remark}\label{no_continuum} 
This result is not completely satisfactory as all these models will appear as 1-dimensional Lie subalgebras  of the 2-dimensional Lie Markov model
$ \langle \B^{\id}_1,\B^{\id}_2 \rangle_{\CC}$.
This situation is quite general and we will avoid the consequent redundancy in the present list by considering families of Lie Markov models depending on some parameters as submodels of a  Lie Markov models with larger dimension. Then, the family of models in Theorem  \ref{1_dim} should be regarded as a Lie Markov model with decomposition $2 \id$.  
%

On the other hand, notice that if we expand the symmetry  and request the models in the family of Theorem  \ref{1_dim} to have the symmetry of $\SG
_4$, we are lead to the constraint  $a=b$, which corresponds to the Jukes-Cantor model \cite{JC69}. Of course, this model already appeared as a Lie Markov model with symmetry $\SG_4$ in \cite{LMM}.\qed
\end{remark}

\noindent
\framebox{\emph{Model 1.1}} Take  $\mathfrak{L}=\lrc{\B^{\id}}$. The stochastic cone has only one ray, spanned by $\B^{\id}$. Therefore, in this case, we only have a ray-orbit. 
We refer to it by the ray-orbit $\trip{1}{\frac{1}{3}}{\frac{2}{3}}$ (see Table \ref{tab:ray_orbits}). 
The generic stochastic rate matrix is 
\begin{eqnarray*}
\left(\begin{array}{cccc}
* & 1 & 1 & 1 \\
1 &* & 1 & 1 \\
1 & 1 & * & 1 \\
1 & 1 & 1 & *
\end{array}\right).
\end{eqnarray*}

%

\vspace*{4mm}  
\noindent
\textsc{Dimension Two}

\vspace*{1mm}
  \subsection*{$\id \oplus \sgn$} We have 
 $ [\B^{\id}_1,\B^{\sgn}]=[\B^{\id}_2,\B^{\sgn}]=0$, so for any fixed $a,b\geq 0$ with $a+b=1$ and $b\neq 0$, there is a well-defined Lie Markov model: 
 \begin{eqnarray*}
 \LL=\lrc{ B^{\id}_{a,b},\B^{\sgn}}\cong \id\oplus \sgn.
 \end{eqnarray*}
 The condition $b\neq 0$ is needed to ensure that the dimension of the stochastic cone is equal to the dimension of the Lie algebra.   
As in Remark~\ref{no_continuum}, these models are  considered as submodels of the model with decomposition $2\id\oplus\sgn$ (see Model 3.3a).

\subsection*{$\id  \oplus  \zeta_1$} Since $[\B^{\id}_1,\B^{\zeta_1}]=[\B^{\id}_2,\B^{\zeta_1}]= 0$, we find that, for any choice of $a,b\geq 0$ with $a+b=1$ and $b\neq 0$,  
 \begin{eqnarray*}
  \LL=\lrc{ B^{\id}_{a,b},\B^{\zeta_1} }\cong \id\oplus \zeta_1,
 \end{eqnarray*}
 provides a 2-dimensional Lie Markov model. 
 These are submodels of the 3-dimension model with decomposition $2\id +\zeta_1$ (see Model 3.3b).



\subsection*{$\id\oplus \zeta_2$} We find the same situation for this decomposition. The following Lie algebras will appear as submodels of the 3-dimensional model indicated: 
\begin{description}
 \item $\LL=\lrc{ B^{\id}_{a,b}, \B^{\zeta_2}}$ is a submodel of Model 3.4; 
  \item $\LL=\lrc{ B^{\id}_{a,b},\B^{\zeta_2}_{1}}$ is a submodel of Model 3.3c. 
 \end{description}

As a special case of the last family of models, when we take $a=1$ and $b
=0$, we obtain the following model:
 \quad
\newline

\noindent
 \framebox{\emph{Model 2.2a}} $\LL=\lrc{ \B^{\id}_1,\B^{\zeta_2}_1}$. 
 The stochastic cone has two rays generated by $\P_{(12)}$ and  $\P_{(34)}$, which form the ray-orbit $\trip{2}{1}{0}$ of Table \ref{tab:ray_orbits}.
The general stochastic rate matrix is 
\begin{eqnarray*}
\left(\begin{array}{cccc}
* & \alpha & 0 & 0 \\
\alpha &* & 0 & 0 \\
0 & 0 & * & \beta \\
0 & 0 & \beta & *
\end{array}\right),\quad \alpha, \beta \geq 0.
\end{eqnarray*}
This model gives a reducible Markov chain, that is, it is not possible to get to some states from some other states. We see that the purine states $A$ and $G$ communicate with each other, and the same for the pyrimidine states $C$ and $T$ (transitions) while no replacement between purines are pyrimidines  (transversions) is allowed.  

Apart from these models, our analysis of 1-dimensional Lie Markov  models produces another 2-dimensional model with decomposition $2 \id$. 

\vspace*{2mm}
\subsection*{$2 \id$} Of course, there is only one possible model with this decomposition. Namely, 

\vspace*{3mm}
 \noindent
 \framebox{\emph{Model 2.2b}} $\LL=\lrc{ \B^{\id}_1,\B^{\id}_2}$. 
If we focus on the stochastic rate matries, we find a cone with 2 rays, corresponding to the (see Table~\ref{tab:ray_orbits}): 
ray-orbit $\trip{1}{1}{0}=\{L_{(12)(34)}\}$, and ray-orbit $\trip{1}{0}{1}
=\{L_{(13)(24)}+L_{(14)(23)}\}$. 
The Lie algebra is abelian and the stochastic rate matrices for this model are given by 
\begin{eqnarray*}
 Q=\left(\begin{array}{cccc}
* & \alpha & \beta & \beta\\
\alpha & * & \beta & \beta \\
\beta & \beta & * & \alpha \\
\beta& \beta & \alpha & *
\end{array}\right), \quad \alpha, \beta \geq 0.
\end{eqnarray*}
\textit{Permutation basis}: $\P_{(12)(34)},\P_{(13)(24)}+\P_{(14)(23)}.$
 
 This model corresponds to the Kimura model with 2 parameters \cite{kimura1980}.


\vspace*{4mm}  
\noindent
\textsc{Dimension Three}

\vspace*{2mm}
 \subsection*{$2\id \oplus  \sgn$} 
There is only one model with this decomposition: 
 
 \vspace*{3mm}
 \noindent
 \framebox{\emph{Model  3.3a}}  $ \LL=\lrc{ \B^{\id}_1,\B^{\id}_2,\B^{\sgn}}$  
is an abelian Lie Markov model. 
The stochastic cone has 3 rays in 2 ray-orbits: ray-orbit $\trip{1}{1}{0}=\{L_{(12)(34)}\}$, and ray-orbit $\trip{2}{0}{1}a=\{L_{(13)(24)},L_{(14)(23)}\}$
(see Table \ref{tab:ray_orbits}). 
The general stochastic rate matrix is 
\begin{eqnarray*}
\left(\begin{array}{cccc}
* & \alpha & \beta & \gamma \\
\alpha & * & \gamma & \beta \\
\beta & \gamma &*& \alpha \\
\gamma & \beta & \alpha & *
\end{array}\right), \quad \alpha, \beta, \gamma \geq 0.
\end{eqnarray*}
%
\textit{Permutation basis}: $\P_{(12)(34)},\P_{(13)(24)},\P_{(14)(23)}.$

Of course, this is the Kimura model with 3 parameters  \cite{kimura1981}.  Note this is the group-based model corresponding to $\mathbb{Z}_2\times \mathbb{Z}_2\cong \left\{e,(12)(34),(13)(24),(14)(23)\right\}$. 

\vspace*{2mm}
\subsection*{$2\id  \oplus  \zeta_1$}
There is only one model with this decomposition:

 \vspace*{3mm}
 \noindent  \framebox{\emph{Model 3.3b}}
$ \LL=\lrc{ \B^{\id}_1,\B^{\id}_2,\B^{\zeta_1} }$
is a 3-dimensional abelian Lie Markov model. The stochastic cone has 3 rays, in 2 ray-orbits: ray-orbit $\trip{1}{1}{0}=\{L_{(12)(34)}\}$, and ray-orbit $\trip{2}{0}{1}b=\{L_{(1324)},L_{(1423)}\}$. 
The general stochastic rate matrix is 
\begin{eqnarray*}
\left(\begin{array}{cccc}
* & \alpha & \beta & \gamma \\
\alpha &* & \gamma & \beta \\
\gamma & \beta &* & \alpha \\
\beta & \gamma & \alpha & *
\end{array}\right), \quad \alpha, \beta, \gamma \geq 0.
\end{eqnarray*}
This new model may be regarded as a ``twisted'' version of the Kimura model with three parameters. 

\textit{Permutation basis}: $\P_{(12)(34)},\P_{(1324)},\P_{(1423)}.$

Note this is the group-based model corresponding to $\mathbb{Z}_4\cong \left\{e,(1324),(12)(34),(1423)\right\}$.

\vspace*{2mm}
\subsection*{$2\id \oplus  \zeta_2$} There are two models with this decomposition: 
 
 \vspace*{3mm}
 \noindent
 \framebox{\emph{Model 3.3c}} $ \LL=\lrc{ \B^{\id}_1,\B^{\id}_2,\B^{\zeta_2}_1 }$ is a 3-dimensional abelian Lie algebra. The stochastic cone has 3 rays, in 2 ray-orbits: 
ray-orbit $\trip{1}{0}{1}=\{L_{(13)(24)}+L_{(14)(23)}\}$, and ray-orbit $\trip{2}{1}{0}=\{L_{(12)},L_{(34)}\}$. 
The general stochastic rate matrix is 
\begin{eqnarray*}
\left(\begin{array}{cccc}
*& \alpha & \beta & \beta \\
\alpha &* & \beta & \beta \\
\beta & \beta & * & \gamma \\
\beta & \beta & \gamma &*
\end{array}\right), \quad \alpha, \beta, \gamma \geq 0.
\end{eqnarray*}
%
%
\textit{Permutation basis}: $\P_{(12)},\P_{(34)},\P_{(1324)}+\P_{(1423)}.$

\vspace*{3mm}

\noindent
 \framebox{\emph{Model 3.4}} $ \LL=\lrc{ \B^{\id}_1,\B^{\id}_2,\B^{\zeta_2}}$ is a 3-dimensional Lie algebra. The stochastic cone has 4 rays, in 3 ray-orbits:  
ray-orbit $\trip{1}{0}{1} = \{\P_{(13)(24)}+\P_{(14)(23)}\}$, 
ray-orbit $\trip{1}{1}{0} = \{\P_{(12)(34)}\}$, and
ray-orbit $\trip{2}{1/3}{2/3} = \{R_{12}^+,R_{34}^+\}$.
This is the first model with $\G$ symmetry where the number of rays is larger than the dimension of the model. %
It is also the first case where the Lie algebra $\LL$ is not abelian: the Lie algebra structure is given by 
\begin{eqnarray*}
 & {[} \P_{(13)(24)}+\P_{(14)(23)},R_{ij}^{+} ]=R_{kl}^{+}-R_{ij}^{+},   \qquad  & {[} \P_{(12)(34)},R_{ij}^{+} ]=0, \\
 & {[} \P_{(13)(24)}+\P_{(14)(23)},\P_{(12)(34)} ]=R_{kl}^{+},  \qquad &  
{[} R_{ij}^{+}, R_{kl}^{+} ]=R_{ij}^{+}-R_{kl}^{+},
\end{eqnarray*}
for $\{ij,kl\}=\{12,34\}$. 
The general stochastic rate matrix is 
\begin{eqnarray*}
\left(\begin{array}{cccc}
* & \alpha + \gamma & \beta + \gamma & \beta + \gamma \\
\alpha + \gamma &*& \beta + \gamma & \beta + \gamma \\
\beta + \delta & \beta + \delta & * & \alpha + \delta \\
\beta + \delta & \beta +\delta & \alpha +\delta &* 
\end{array}\right), \quad \alpha, \beta, \gamma, \delta \geq 0.
\end{eqnarray*}
\textit{Permutation basis}: $\P_{(12)(34)},R_{12}^{+},R_{34}^{+}.$


\vspace*{4mm}  
\noindent 
\textsc{Dimension Four}

\vspace*{3mm}
\noindent
 Lie Markov models with this dimension appear as special submodels of forthcoming models $5.7a$, $5.7b$ and $5.7c$ with decomposition $2\id \oplus  \sgn \oplus \xi$, when we restrict the identity component of their Lie algebra $\LL$ to a subspace $\lrc{ \B^{\id}_{a,b}} $ with $a,b\geq 0$. 
 The reader can check that depending on the values of $a$ and $b$ the number of rays of the cones of these models may vary. 

 \vspace*{2mm}
 \subsection*{$\id  \oplus  \sgn  \oplus  \zeta_1  \oplus  \zeta_2$}  The models with this decomposition appear as special cases of Model 5.6a with decomposition $2 \id  \oplus  \sgn  \oplus  \zeta_1  \oplus  \zeta_2$ (see Remark \ref{no_continuum}).
 
 \vspace*{2mm}
 \subsection*{$\id  \oplus  \zeta_2  \oplus  \xi$} Similarly, these models  are special cases of the models $5.6b$, $5.11a$, $5.11b$, $5.11c$ and $5.16$ with decomposition $2 \id  \oplus  \zeta_2  \oplus  \xi$.
As a particular case, if we request these models to have $\SG_4$ symmetry, we obtain the restriction $a=b$, leading to the Felsenstein 1981 model \cite{felsenstein1981}:

\vspace*{3mm}

\noindent
 \framebox{\emph{Model 4.4a}}  $ \LL=\lrc{ \B^{\id},\B^{\zeta_2},
\B^{\xi}_1,\B^{\xi}_2 }$ is a 4-dimensional Lie algebra. The stochastic cone has 4 rays  in one single ray-orbit: $\trip{4}{\frac{1}{3}}{\frac{2}{3}}a=\{R_1,R_2,R_3,R_4\}$, and the Lie algebra structure is given by $ [R_i,R_j]=R_i-R_j$.
The general stochastic rate matrix is
\begin{eqnarray*}
\left(\begin{array}{cccc}
* & \alpha  & \alpha & \alpha \\
\beta & *& \beta & \beta \\
 \gamma & \gamma & * & \gamma \\
\delta & \delta & \delta & *
\end{array}\right), \quad \alpha, \beta, \gamma, \delta \geq 0.
\end{eqnarray*}
Of course, this is Felsenstein 1981 model (see \cite{felsenstein1981}).

\textit{Permutation basis}: $R_{1},R_{2},R_{3},R_{4}.$


\vspace*{2mm}
\subsection*{$2 \id  \oplus  2 \zeta_2$} There is only one model with this decomposition. 
 
 \vspace*{3mm}
 
 \noindent
 \framebox{\emph{Model 4.4b}}  Take $ \LL=\lrc{ \B^{\id}_1,\B^{\id}_2,\B^{\zeta_2}_1,\B^{\zeta_2}_2 }$.  The stochastic cone has 4 rays, in 2 ray-orbits: 
ray-orbit $\trip{2}{0}{1}c = \{\Q{12},\Q{34}\}$, and
ray-orbit $\trip{2}{1}{0} =\{\P_{(12)},\P_{(34)}\}$.
The Lie algebra is given by 
\begin{eqnarray*}
{[}\P_{(12)},\P_{(34)}]& = & 0, \\ {[}\P_{(12)},\Q{ij}]& = &{[}\P
_{(34)},\Q{ij}]=0, \quad ij\in \{12,34\} \\
{[}\Q{12},\Q{34}]& = &2(\Q{34}-\Q{12})+2(\P_{(34)}-\P_{(12)}).
\end{eqnarray*}
The general stochastic rate matrix is
\begin{eqnarray*}
\left(\begin{array}{cccc}
* & \alpha  & \beta & \beta \\
\alpha & *& \beta & \beta \\
\gamma & \gamma & * & \delta \\
\gamma & \gamma & \delta & *
\end{array}\right), \quad \alpha, \beta, \gamma, \delta \geq 0
\end{eqnarray*}
%
%
%
\textit{Permutation basis}: $\P_{(12)},\P_{(34)},\Q{12},\Q{34}.$

\vspace*{2mm}
\subsection*{$2 \id  \oplus  \sgn  \oplus  \zeta_2$}
There is only one model with this decomposition. 
 
 \vspace*{3mm}
 \noindent
 \framebox{\emph{Model 4.5a}} $ \LL=\lrc{ \B^{\id}_1,\B^{\id}_2,\B^{\sgn},\B^{\zeta_2} }$ is a 4-dimensional Lie algebra. 
The stochastic cone has 5 rays spanned, in 3 ray-orbits: $\trip{1}{1}{0}$, $\trip{2}{0}{1}a$ and $\trip{2}{\frac{1}{3}}{\frac{2}{3}}$.
The general stochastic rate matrix is
\begin{eqnarray*}
\left(\begin{array}{cccc}
* & \alpha + \delta & \beta + \delta & \gamma + \delta \\
\alpha + \delta & * & \gamma + \delta& \beta + \delta \\
\beta + \varepsilon & \gamma + \varepsilon& * & \alpha + \varepsilon \\
\gamma + \varepsilon& \beta + \varepsilon & \alpha + \varepsilon & *
\end{array}\right), \quad \alpha, \beta, \gamma, \delta, \varepsilon \geq 0.
\end{eqnarray*}
%
\textit{Permutation basis}: $R_{12}^{+},R_{34}^{+},\P_{(13)(24)},\P_{(14)(23)}.$

\vspace*{2mm}
\subsection*{$2 \id  \oplus  \zeta_1  \oplus  \zeta_2$} 
There is only one model with this decomposition. 

\vspace*{3mm}
\noindent
 \framebox{\emph{Model 4.5b}} $ \LL=\lrc{ \B^{\id}_1,\B^{\id}_2,\B^{\zeta_1},\B^{\zeta_2} }$ is a 4-dimensional Lie algebra. The stochastic cone has 5 rays, in 3 ray-orbits: $\trip{1}{1}{0}$, $\trip{2}{0}{1}b$ and $\trip{2}{\frac{1}{3}}{\frac{2}{3}}$. 
 Then, 
the general stochastic rate matrix is
\begin{eqnarray*}
\left(\begin{array}{cccc}
* & \alpha + \delta & \beta + \delta & \gamma + \delta \\
\alpha + \delta & * & \gamma + \delta& \beta + \delta \\
\gamma + \varepsilon & \beta + \varepsilon& * & \alpha + \varepsilon \\
\beta + \varepsilon& \gamma + \varepsilon & \alpha + \varepsilon & *
\end{array}\right), \quad \alpha, \beta, \gamma, \delta, \varepsilon  \geq 0.
\end{eqnarray*}
%
%
\textit{Permutation basis}: $R_{12}^{+},R_{34}^{+},\P_{(1324)},\P_{(1423)}.$

\vspace*{3mm}

The remaining models, with dimensions 5 to 12, are presented in the Table \ref{tab:models}, with a complete list in explicit form available at \cite{LMMweb}.
The first column of the table gives the name of the model and the second column gives a basis for the corresponding Lie subalgebra. The third column gives the ray-orbits for the stochastic cone of that Lie subalgebra. 

\begin{remark}
A number of models deserve some remarks. 
\begin{itemize}
 \item  Model 5.6b can be regarded as the vector sum of the Felsenstein 1981 model \cite{felsenstein1981} and the Kimura model with 2 parameters.

\item  Model 6.7a already appeared in \cite{LMM} under the name $K3ST+F81$ and it has $\SG_4$ symmetry. A permutation basis for this model is $\{\P
_{(13)(24)}, \P_{(14)(23)}$, 
$R_1, R_2, R_3, R_4\}$, which is not invariant under the action of $\SG_4$.
As its name suggests, this model is the vector sum of the Kimura 3 parameter and Felsenstein 1981 models.
A permutation basis consistent with the symmetry $\SG_4$ is given by the vectors $W_{ij}$ of Example \ref{Ex:F81+K3ST} (see also  \cite{LMM}).

\item Of course, Model 12.12 is the general Markov model and we include it in the list for completion. \qed
 \end{itemize}
 
\end{remark}

\begin{table}

 \caption{\label{tab:models} \small{List of Lie Markov models with purine/pyrimidine symmetry  for dimension 5, 6, 8, 9, 10 and 12 (the Lie Markov models with dimension 1 to 4 are described within the text). The first column gives the name of the model, while the second and the third column provide a basis of the corresponding Lie subalgebra and the ray-orbits of the corresponding stochastic cone, respectively (see Table \ref{tab:ray_orbits}). }}
 \vspace*{2mm}
 \begin{adjustwidth}{-1cm}{-1cm}
 \begin{tabular}{|c|p{6.8cm}|p{8.2cm}|}
\hline
\multicolumn{3}{|l|}{$2 \id  \oplus  \sgn  \oplus  \xi$} \\ \hline
\textbf{5.7a} & $\lrc{ \B^{\id}_1,\B^{\id}_2,\B^{\sgn},  \B^{\xi}_1,\B^{\xi}_2}$  & $\trip{1}{1}{0}, \trip{2}{0}{1}a, \trip{4}{\frac{1}{3}}{\frac{2}{3}}d$ \\ 
\textbf{5.7b} & $\lrc{ \B^{\id}_1,\B^{\id}_2,\B^{\sgn},
\B^{\xi}_3,\B^{\xi}_4}$  & $\trip{1}{1}{0},\trip{2}{0}{1}a,\trip{4}{\frac{1}{3}}{\frac{2}{3}}e$ \\ 
\textbf{5.7c} & $\lrc{\B^{\id}_1,\B^{\id}_2,\B^{\sgn}, \B^{\xi}_5,\B^{\xi}_6}$  & $\trip{1}{1}{0},\trip{2}{0}{1}a,\trip{4}{\frac{1}{3}}{\frac{2}{3}}f$ \\\hline \hline 
 \multicolumn{3}{|l|}{$2 \id  \oplus  \sgn \oplus  \zeta_1  \oplus  \zeta
_2$} \\\hline 
 \textbf{5.6a} & $\lrc{ \B^{\id}_1,\B^{\id}_2,\B^{\sgn},\B^{\zeta_1},\B^{\zeta_2}_1}$  & $\trip{2}{0}{1}a, \trip{2}{0}{1}b, \trip{2}{1}{0}$ \\ \hline \hline     
 \multicolumn{3}{|l|}{$2 \id  \oplus  \zeta_2  \oplus  \xi$} \\\hline 
  \textbf{5.6b} & $\lrc{\B^{\id}_1,\B^{\id}_2,\B^{\zeta_2},\B^{\xi}_1,\B^{\xi}_2}$ & $\trip{1}{0}{1},\trip{1}{1}{0}, \trip{4}{\frac{1}{3}}{\frac{2}{3}}a$ \\
\textbf{5.16} & $\lrc{ \B^{\id}_1,\B^{\id}_2,\B^{\zeta_2},
\B^{\xi}_5,\B^{\xi}_6}$ & $\trip{1}{0}{1},\trip{1}{1}{0},\trip{2}{\frac{1}{3}}{\frac{2}{3}},  \trip{4}{\frac{1}{7}}{\frac{6}{7}},\trip{4}{\frac{1}{3}}{\frac{2}{3}}f,\trip{4}{\frac{3}{5}}{\frac{2}{5}}$ \\
\textbf{5.11a} & $\lrc{ \B^{\id}_1,\B^{\id}_2,\B^{\zeta_2}_1,
\B^{\xi}_1,\B^{\xi}_2}$ &
 $\trip{1}{0}{1}, \trip{2}{1}{0}, \trip{4}{\frac{1}{3}}{\frac{2}{3}}d, \trip{4}{\frac{1}{5}}{\frac{4}{5}}a$ \\
\textbf{5.11b} & $\lrc{ \B^{\id}_1,\B^{\id}_2,\B^{\zeta_2}_1,
\B^{\xi}_3,\B^{\xi}_4}$ & 
  $\trip{1}{0}{1},\trip{2}{1}{0},\trip{4}{\frac{1}{3}}{\frac{2}{3}}e,\trip{4}{\frac{1}{5}}{\frac{4}{5}}b$ \\
\textbf{5.11c} & $\lrc{ \B^{\id}_1,\B^{\id}_2,\B^{\zeta_2}_1,
\B^{\xi}_5,\B^{\xi}_6}$  & 
$\trip{1}{0}{1},\trip{2}{1}{0},\trip{4}{\frac{1}{3}}{\frac{2}{3}}f,\trip{4}{\frac{1}{5}}{\frac{4}{5}}c$\\\hline \hline
\multicolumn{3}{|l|}{$2 \id  \oplus  \sgn \oplus  \zeta_1  \oplus  2 \zeta_2$}\\ \hline 
  \textbf{6.6} & $\lrc{ \B^{\id}_1,\B^{\id}_2,\B^{\sgn},\B^{\zeta_1},\B^{\zeta_2}_1,\B^{\zeta_2}_2 }$ &  $\trip{2}{1}{0},\trip{4}{0}{1}e$ \\ \hline \hline
\multicolumn{3}{|l|}{$2 \id  \oplus  \sgn  \oplus  \zeta_2  \oplus  \xi$}\\ \hline 
  \textbf{6.7a} & $\lrc{ \B^{\id}_1,\B^{\id}_2,\B^{\sgn},\B^{\zeta_2}, \B^{\xi}_1,\B^{\xi}_2}$ & 
$\trip{1}{1}{0},\trip{2}{0}{1}a,\trip{4}{\frac{1}{3}}{\frac{2}{3}}a$ \\
\textbf{6.17a} & $\lrc{ \B^{\id}_1,\B^{\id}_2,\B^{\sgn},\B^{\zeta_2},\B^{\xi}_5,\B^{\xi}_6}$ & 
 $\trip{1}{1}{0},\trip{2}{0}{1}a,\trip{2}{\frac{1}{3}}{\frac{2}{3}},\trip{4}{\frac{1}{7}}{\frac{6}{7}},\trip{4}{\frac{1}{3}}{\frac{2}{3}}f,\trip{4}{\frac{3}{5}}{\frac{2}{5}}$ \\\hline \hline
\multicolumn{3}{|l|}{$2 \id  \oplus  \zeta_1  \oplus  \zeta_2  \oplus  \xi$}\\ \hline 
 \textbf{6.7b} & $\lrc{ \B^{\id}_1,\B^{\id}_2,\B^{\zeta_1},\B^{\zeta
_2}, \B^{\xi}_1,\B^{\xi}_2}$ &   
 $\trip{1}{0}{1},\trip{2}{0}{1}b,\trip{4}{\frac{1}{3}}{\frac{2}{3}}a$\\
\textbf{6.17b} & $\lrc{ \B^{\id}_1,\B^{\id}_2,\B^{\zeta_1},\B^{\zeta
_2},\B^{\xi}_5,\B^{\xi}_6}$ & 
$\trip{1}{1}{0},\trip{2}{0}{1}b,\trip{2}{\frac{1}{3}}{\frac{2}{3}},\trip{4}{\frac{1}{7}}{\frac{6}{7}},\trip{4}{\frac{1}{3}}{\frac{2}{3}}f, \trip{4}{\frac{3}{5}}{\frac{2}{5}}$\\\hline \hline
\multicolumn{3}{|l|}{$2 \id  \oplus  2 \zeta_2  \oplus  \xi$}\\ \hline 
 \textbf{6.8a} & $\lrc{ \B^{\id}_1,\B^{\id}_2,\B^{\zeta_2}_1,\B^{\zeta_2}_2, \B^{\xi}_1,\B^{\xi}_2}$ & 
 $\trip{2}{0}{1}c,\trip{2}{1}{0},\trip{4}{\frac{1}{3}}{\frac{2}{3}}a$ \\
\textbf{6.8b} & $\lrc{ \B^{\id}_1,\B^{\id}_2,\B^{\zeta_2}_1,\B^{\zeta_2}_2,\B^{\xi}_5,\B^{\xi}_6}$ & 
 $\trip{2}{0}{1}c,\trip{2}{1}{0},\trip{4}{\frac{1}{3}}{\frac{2}{3}}b$ \\\hline \hline
\multicolumn{3}{|l|}{$2 \id  \oplus  2 \zeta_2  \oplus  2 \xi$}\\ \hline

\textbf{8.8} & $\lrc{ \B^{\id}_1,\B^{\id}_2,\B^{\zeta_2}_1,\B^{\zeta_2}_2,  \B^{\xi}_1,\B^{\xi}_2,\B^{\xi}_3,\B^{\xi}_4}$ &  
 $\trip{4}{0}{1}c,\trip{4}{1}{0}a$ \\
\textbf{8.16} & $\lrc{ \B^{\id}_1,\B^{\id}_2,\B^{\zeta_2}_1,\B^{\zeta_2}_2, \B^{\xi}_1,\B^{\xi}_2,\B^{\xi}_5,\B^{\xi}_6 }$ & 
 $\trip{2}{0}{1}c,\trip{2}{1}{0},\trip{4}{0}{1}a,\trip{4}{\frac{1}{3}}{\frac{2}{3}}a,\trip{4}{\frac{1}{3}}{\frac{2}{3}}b$\\\hline \hline
\multicolumn{3}{|l|}{$2 \id  \oplus  \sgn \oplus  \zeta_1  \oplus  2 \zeta_2 \oplus \xi$}\\ \hline 
 \textbf{8.10a} & $\lrc{ \B^{\id}_1,\B^{\id}_2,\B^{\sgn},\B^{\zeta_1},\B^{\zeta_2}_1,\B^{\zeta_2}_2,\B^{\xi}_1, \B^{\xi}_2}$ &  $\mathrm{\trip{2}{1}{0},\trip{4}{0}{1}e,\trip{4}{\frac{1}{3}}{\frac{2}{3}}a}$\\
\textbf{8.10b} &  $\lrc{ \B^{\id}_1,\B^{\id}_2,\B^{\sgn},\B^{\zeta_1},\B^{\zeta_2}_1,\B^{\zeta_2}_2,\B^{\xi}_{5},\B^{\xi}_{6}} $ &  $\trip{2}{1}{0},\trip{4}{0}{1}e,\trip{4}{\frac{1}{3}}{\frac{2}{3}}b$\\\hline \hline
\multicolumn{3}{|l|}{$2 \id  \oplus  \sgn  \oplus  \zeta_2  \oplus  2 \xi$}\\ 
 \hline
\textbf{8.17} & $\lrc{ \B^{\id}_1,\B^{\id}_2,\B^{\sgn},\B^{\zeta_2}, \B^{\xi}_1,\B^{\xi}_2,\B^{\xi}_3,\B^{\xi}_4}$ & 
 $\trip{1}{1}{0},\trip{4}{0}{1}d,\trip{4}{\frac{1}{3}}{\frac{2}{3}}a,\trip{4}{\frac{1}{2}}{\frac{1}{2}}a,\trip{4}{\frac{3}{5}}{\frac{2}{5}}$\\
 \textbf{8.18} & $\lrc{ \B^{\id}_1,\B^{\id}_2,\B^{\sgn},\B^{\zeta_2},\B^{\xi}_1,\B^{\xi}_2, \B^{\xi}_3,\B^{\xi}_4}$ & $\trip{2}{0}{1}a,\trip{4}{0}{1}b,\trip{4}{\frac{1}{3}}{\frac{2}{3}}a,\trip{4}{\frac{1}{3}}{\frac{2}{3}}b, \trip{4}{1}{0}b$\\\hline \hline
 \multicolumn{3}{|l}{$2 \id \oplus  \sgn  \oplus  \zeta_1  \oplus  \zeta
_2 \oplus  2 \xi$}\\ \hline 
\textbf{9.20a}  & $\lrc{ \B^{\id}_1,\B^{\id}_2,\B^{\sgn},\B^{\zeta_1}, \B^{\zeta_2}_1,  \B^{\xi}_1,\B^{\xi}_2,\B^{\xi}_3,\B^{\xi}_4}$ &  $\mathrm{\trip{4}{1}{0}a,\trip{2}{0}{1}a,\trip{2}{0}{1}b,\trip{4}{0}{1}b,\trip{8}{0}{1}b}$\\
 \textbf{9.20b} & $\lrc{ \B^{\id}_1,\B^{\id}_2,\B^{\sgn},\B^{\zeta_1},\B^{\zeta_2}_2,
  \B^{\xi}_3,\B^{\xi}_4,\B^{\xi}_5,\B^{\xi}_6}$
 & $\trip{2}{0}{1}b,\trip{2}{1}{0},\trip{4}{0}{1}f,\trip{4}{\frac{1}{2}}{\frac{1}{2}}b,\trip{8}{\frac{1}{3}}{\frac{2}{3}}b$ \\\hline \hline
 \multicolumn{3}{|l|}{$2 \id \oplus  \sgn  \oplus  \zeta_1  \oplus  2 \zeta_2 \oplus  2 \xi$}\\ \hline 
  \textbf{10.12} & $\lrc{ \B^{\id}_1,\B^{\id}_2,\B^{\sgn}, \B^{\zeta_1},\B^{\zeta_2}_1,\B^{\zeta_2}_2,
 \B^{\xi}_1,\B^{\xi}_2,\B^{\xi}_3,\B^{\xi}_4}$ &  
 $\trip{4}{0}{1}c,\trip{4}{0}{1}e,\trip{4}{1}{0}a$ \\
  \textbf{10.34} & $\lrc{ \B^{\id}_1,\B^{\id}_2,\B^{\sgn},\B^{\zeta_1}, \B^{\zeta_2}_1,\B^{\zeta_2}_2,
 \B^{\xi}_1,\B^{\xi}_2,\B^{\xi}_5,\B^{\xi}_6}$ &  
 $\trip{2}{1}{0},\trip{4}{0}{1}d,\trip{4}{0}{1}e,\trip{4}{\frac{1}{3}}{\frac{2}{3}}a,\trip{4}{\frac{1}{3}}{\frac{2}{3}}b,\ldots \newline \hspace*{4.7cm} \ldots \trip{8}{\frac{1}{3}}{\frac{2}{3}}a, \trip{8}{1}{1}$\\\hline \hline
 \multicolumn{3}{|l|}{$2 \id \oplus  \sgn  \oplus  \zeta_1  \oplus  2 \zeta_2 \oplus  3 \xi$}\\ \hline 
 \textbf{12.12} & $\LL_{GM}$ & $\trip{4}{1}{0}a,\trip{8}{0}{1}a$\\ \hline
 \end{tabular}
 \end{adjustwidth}
 \thisfloatpagestyle{empty}
 \end{table}

\begin{remark}\label{HKY}
 The reader may notice the resemblance of Model 5.6b wih the HKY model (\cite{HKY}):
 \begin{eqnarray*}
Q_{5.6b}=\left(\begin{array}{cccc}
* & a+x & b+ x & b + x \\
a +y & * &b +y & b +y \\
b + z & b + z &* & a + z \\
b +t & b +t & a +t & *
\end{array}\right)    \qquad Q_{HKY}=\left(\begin{array}{cccc}
* & \pi_A \alpha & \pi_A \beta & \pi_A \beta\\
\pi_G \alpha & * & \pi_G \beta& \pi_G \beta\\
 \pi_C \beta & \pi_C \beta&* & \pi_C \alpha \\
\pi_T \beta& \pi_T \beta & \pi_T \alpha & *
\end{array}\right), 
 \end{eqnarray*}
 where $\pi_A+\pi_C+\pi_G+\pi_T=1$, and all these parameters are non-negative. 
Although the rates of these models depend on the parameters in a different way, the rate-matrices $Q_{HKY}$ and $Q_{5.6b}$ have the same structure. It is interesting to notice that the form of the non-diagonal entries of $Q_{5.6b}$ arises from the corresponding entries in $Q_{HKY}$ just by applying minus the logarithm, producing the following correspondance between the parameters of both models
\begin{eqnarray*}
 x=-log(\pi_A), \quad y=-log(\pi_G), \quad z=-log(\pi_C), \quad  t
=-log(\pi_T), \quad 
 a=-log(\alpha), \quad b=-log(\beta).
\end{eqnarray*}
Actually, this map induces a bijection between the Lie algebra $\LL_{5.6b}=\lrc{\B^{\id}_1,\B^{\id}_2,\B^{\zeta_2},\B^{\xi}_1,\B^{\xi}_2}$  and the set of (not necessarily stochastic) rate matrices of HKY model. The inverse is given by  
\begin{eqnarray*}
 Q=\sum_{i\neq j} q_{ij}L_{ij} \mapsto \sum_{i\neq j} e^{-q_{ij}}L_{ij}.
\end{eqnarray*}
However, these two models have different essential properties. For instance, while Model 5.6b is given by the \emph{linear} variety $\LL_{5.6b}$,  it can be seen that the set of rate matrices of HKY model describes a variety that is not linear and contains  singular points. 
The deep connection between Lie Markov models and submodels of the general time reversible model appears as a beautiful line of research that will deserve some attention from the authors in the future. 
%
\qed
\end{remark}

\begin{remark}
As already noted in Remark \ref{no_continuum}, a number of models in the above list have more symmetries than those requested by the group $\G$, and they already appeared in \cite{LMM} as Lie Markov models with $ \SG_4$ symmetry. 
   For those models, the decomposition into irreducible representations of $\G$ can be obtained from the decomposition into irreducible representations of $\SG_4$ by applying the branching rule of Table \ref{tab: branching_rule} (cf. Table 2 of \cite{LMM}). 
Since there are no subgroups between $\G$ and $\SG_4$, we can conclude that the rest of models listed here do not have further symmetries.\qed
\end{remark}

\begin{remark}\label{rmk:algebras}
A vector subspace $\LL$ in $\LL_{GM}$ is a \emph{matrix algebra} if its multiplicatively closed, that is, if the product $XY$ lies in $\LL$ for any couple  $X,Y\in \LL$. Of course, this condition is stronger than that of a Lie algebra. The reader may wonder which of the models in the above list are actually algebras. 
The authors were surprised to find that the only Lie algebras which are not algebras correspond to the Lie Markov models that appear in families depending on some parameters $a,b$ in the sense of Remark \ref{no_continuum}, that is, the Lie Markov models corresponding to the following decompositions:

\begin{description}
\item $\id$ (for dimension 1);  
 \item $\id \oplus \sgn$, $\id \oplus \zeta_1$ and $\id \oplus \zeta_2$ (for dimension 3); 
 \item $\id \oplus \zeta_{2} \oplus \xi$, $\id \oplus \sgn \oplus \xi$,  $\id \oplus \sgn \oplus \zeta_{1} \oplus \zeta_{2}$ (for dimension 4); and
 \item $\id \oplus \sgn \oplus \zeta_{1} \oplus \zeta_{2} \oplus \xi$ (for dimension 8).
\end{description}
Notice that these decompositions correspond exactly to those decomposition of Table~\ref{tab:all_decs} that arise as irreducible permutation representations $\langle \G/H\rangle_{\mathbb{C}}$ for a subgroup $H$ of $\G$. \qed
\end{remark}

\section{Discussion}\label{conc}
Following the ideas of \cite{LMM}, in this paper we have discussed Lie Markov models with purine/pyrimidine symmetry.
This symmetry was mathematically expressed by taking the group of nucleotide permutations
\[\G=\{e,(AG),(CT),(AG)(CT),(AC)(GT),(AT)(GC),(ACGT),(ATGC)\}.\]
Our main motivation is that this symmetry may be of special interest to the biologist who wishes to deal with models preserving the specific grouping of nucleotides into purines and pyrimidines.  
In Section~\ref{sec2} we recalled some of the basic definitions on Lie Markov models and the required tools arising from representation theory of groups. 
At the same time, we introduce a new concept which is the stochastic cone of a Lie Markov model, being the set of stochastic rate matrices of the Lie Markov model. 
In Section~\ref{sec3} we explained how to derive Lie Markov models with prescribed symmetry and discussed the geometry of the corresponding  cone of stochastic rate matrices.
In Section~\ref{sec4} we took the permutation group $\G$ and decomposed the space of all rate matrices into irreducible modules of $\G$ and provided a basis consistent with this decomposition. 
In Section~\ref{sec5} we gave the full list of all Lie Markov models with $\G$ symmetry, arranged by their dimension. 

In Section~\ref{sec2} we defined evolutionary models from a rate matrix perspective as some well-defined linear subspaces of the  space $\LL_{GM}$ of all rate matrices. 
We could focus on the substitution matrices instead, and keeping in mind the importance of substitution matrices being multiplicatively closed (see \cite{LMM}), define ``evolutionary model'' as some well-defined groups $\mathfrak{M}$ of matrices in $M_n(\mathbb{R})$. 
Then, when we restrict to the stochastic setting, we are led to consider the intersection of $\mathfrak{M}$ with the \emph{stochastic} polytope: 
\begin{eqnarray*}
 \mathbb{P}_{sto}=\left \{M=(m_{ij})\in \mathfrak{M} \mid  m_{ij}\geq 0, \sum_i m_{ij}=1 \right \}.
\end{eqnarray*}
The reader may note that $\mathbb{P}_{sto}$ is a compact polytope with the identity matrix in one of the vertices. This polytope is cut into several connected components by the algebraic hypersurface of equation $det(M)=0$. 
From a biological point of view, we are mainly interested in the connected component that contains the identity matrix. 
This is because, by continuity arguments, this connected component contains the exponential of the stochastic rate matrices of the model. 
In this paper, we have preferred to introduce evolutionary models from the point of view of rate matrices because both the definition of Lie Markov models and the procedure to construct them appear in a natural way in this setting. 
However, the connection between rate matrices and substitution models is not completely clear, and it deserves further attention. 
An interesting question on this issue is whether the image of the exponential map restricted to the stochastic cone covers the whole connected component of the identity. 
We want to explore this question in the future to clarify the connection between substitution and rate matrices. 


Although we have kept the original definition of symmetry for a Lie Markov model of \cite{LMM}, an interesting question arises if one tries to expand this definition. Namely, we could investigate Lie Markov models which are  invariant under the action of some permutation subgroup $G$ of $\SG_4$ without the additional request that they have a \emph{permutation basis}. Since the action of $\SG_4$ is linear, this would lead to strongly convex polyhedral cones. %
From an applied point of view, we do not find any particular reason not to consider this expanded definition, which would lead to a huge number of possible models. For example, we would admit the complex span of the ray-orbits $\trip{4}{\frac{1}{3}}{\frac{2}{3}}d: \LL =\langle R_{13}^+, R_{14}^+, R_{23}^+, R_{24}^+\rangle_{\CC} $, $\trip{4}{\frac{1}{3}}{\frac{2}{3}}e: \LL = \langle H_{13}^+, H_{14}^+, H_{23}^+, H_{24}^+\rangle_{\CC}$ and $\trip{4}{\frac{1}{3}}{\frac{2}{3}}f:  \LL = \langle V_{13}^+, V_{14}^+, V_{23}^+, V_{24}^+\rangle_{\CC}$ (see Table \ref{tab:Z2models} 
 as models with symmetry $\G$ and decomposition $\id \oplus \xi$. The reader may note that this decomposition does not appear in the list of Table \ref{tab:Z2models}. 
More interestingly, it is not hard to show that the set of doubly stochastic rate matrices has $\SG_4$ symmetry under this expanded definition, and moreover forms a Lie algebra.
The authors keep back this line of research for future publication.

\begin{table}[h]
\caption{\label{tab:Liebracket} \small{ The Lie brackets of the basis $\{\B^{*}_j\}$ of $\LL_{GM}$. The entries not included are easily determined by applying the rule $[X,Y]=-[Y,X]$. Here we use the notation $E_{3,5}=-6\B^{\id}_1+2\B^{\id}_2-2\B^{\zeta_2}_1$,
 $E_{3,6}=-2\B^{\sgn}-2 \B^{\zeta_1}$,
 $E_{4,5}=-2\B^{\sgn}+2\B^{\zeta_1}$,
 $E_{4,6}=-6\B^{\id}_1+2\B^{\id}_2+2\B^{\zeta_2}_1$.}}
\centering
\vspace*{2mm}
\begin{tabular}{c||cccccccccccc}
& $\B^{\id}_1$ & $\B^{\id}$ & $\B^{\sgn}$ &  $\B^{\zeta_1}$ & $\B^{\zeta_2}_1$ & $\B^{\zeta_2}$ & $\B^{\xi}_1$ & $\B^{\xi}_2$ & $\B^{\xi}_3$ & $\B^{\xi}_4$ & $\B^{\xi}_5$ & $\B^{\xi}_6$ \\ \hline \hline
$\B^{\id}_1$  & 0 & 0 &  0 & 0 & 0 & 0 & $-2\B^{\xi}_1$ & $-2\B^{\xi}_2$ & $-2\B^{\xi}_3$ & $-2\B^{\xi}_4$ & $2\B^{\xi}_5$ & $2\B^{\xi}
_6$ \\
$\B^{\id}$ &   & 0 & 0 & 0 & 0 & $-4\B^{\zeta_2}$ & $-4\B^{\xi}_1$ & $-4\B^{\xi}_2$ & 0 & 0 & 0 &  0 \\ 
$\B^{\sgn}$ &   &   & 0 & 4 $\B^{\zeta_2}_1$ & 4 $\B^{\zeta_1}$  & 0 &  $2\B^{\xi}_2$ & $2\B^{\xi}_1$ & $-2\B^{\xi}_4$ & $-2\B^{\xi}_3$ & $2\B^{\xi}_6$ & $2\B^{\xi}_5$ \\
$\B^{\zeta_1}$ &   &   &   & 0 & 4 $\B^{\sgn}$ & 0 & $-2\B^{\xi}_2$  & $2\B^{\xi}_1$ & $2\B^{\xi}_4$ & $-2\B^{\xi}_3$ & $2\B^{\xi}_6$ & $-2\B^{\xi}_5$  \\ 
$\B^{\zeta_2}_1$  &   &   &   &    & 0 & 0 & $-2\B^{\xi}_1$ & $2\B^{\xi}_2$& $-2\B^{\xi}_3$ & $2\B^{\xi}_4$ & $2\B^{\xi}_5$ & $-2\B^{\xi}_6$ \\
$\B^{\zeta_2}$ &   &   &   &   &   & 0 & 0 & 0 & 4 $\B^{\xi}_1$ & $-4\B^{\xi}_2$ & 0 & 0 \\ 
$\B^{\xi}_1$  &   &   &   &   &   &   & 0 & 0 & 0 & 0 & $2\B^{\zeta_2}$ & 0 \\
$\B ^{\xi}_2$  &   &   &   &   &   &   &   & 0 & 0 & 0 & 0 & $-2\B^{\zeta
_2}_2$\\
$\B^{\xi}_3$ &   &   &   &   &   &   &   &   & 0 & 0 & $E_{3,5}$ & $E_{3,6}$ \\
$\B^{\xi}_4$ &   &   &   &   &   &   &   &   &   & 0 & $E_{4,5}$  &
$E_{4,6}$ \\
$\B^{\xi}_5$ &   &   &   &   &   &   &   &   &   &   & 0 & 0\\
$\B^{\xi}_6$ &   &   &   &   &   &   &   &   &   &   &   & 0  \\
\end{tabular}
\end{table}



\normalsize

\quad


\begin{landscape}
\begin{table}[t]
{\renewcommand{\arraystretch}{1.2}}
\footnotesize
\centering
\begin{tabular}{|p{1.2cm}|p{12cm}|p{5.5cm}|}
\hline
orbit & matrices & {dec. as an abstract set / dec. in $\LL_{GM}$}\\
\hline \hline
$\trip{1}{1}{0}$ &  $\P_{(12)(34)}$ &$\id$\\ 
$\trip{1}{0}{1}$ & $\P_{(13)(24)}+\P_{(14)(23)}$ & $\id$  \\
$\trip{1}{\frac{1}{3}}{\frac{2}{3}}$ & $\P_{(12)(34)}+\P_{(13)(24)}+\P
_{(14)(23)}$ &$\id$  \\ \hline \hline
$\trip{2}{0}{1}a$ & $\P_{(13)(24)}, \P_{(14)(23)}$ & $\id \oplus \sgn$  \\
$\trip{2}{0}{1}b$ & $\P_{(1324)}, \P_{(1423)}$ & $\id \oplus \zeta_1$  \\ 
$\trip{2}{0}{1}c$ & $\Q{12}, \Q{34}$ & $\id \oplus \zeta_2$  \\
$\trip{2}{1}{0}$ & $\P_{(12)}, \P_{(34)}$ & $\id \oplus \zeta_2$  \\

$\trip{2}{\frac{1}{3}}{\frac{2}{3}}$ & $R_{12}^{+}, R_{34}^{+}$ & $\id \oplus \zeta_2$   \\\hline \hline
$\trip{4}{0}{1}a$ & $V_1+H_2-2\P_{12},V_2+H_1-2\P_{12}, V_3+H
_4-2\P_{34},V_4+H_3-2\P_{34}$ & $\id  \oplus \zeta_2 \oplus \xi$ / $\id  \oplus  \xi$ \\
$\trip{4}{0}{1}b$  & $H_{13}^{+}-(L_{21}+L_{43}), H_{14}^{+}-(L_{21}+L_{34}), H_{23}^{+}-(L_{12}+L_{43}), 
H_{24}^{+}-(L_{12}+L_{34})$ & $\id  \oplus \zeta_2 \oplus \xi$ / $\id  \oplus  \xi$  \\
$\trip{4}{0}{1}c$  &  $L_{13}+L_{14},L_{23}+L_{24},L_{31}+L_{32},L_{41}+L_{42}$ & $\id  \oplus \zeta_2 \oplus \xi$ \\
$\trip{4}{0}{1}d$ & $L_{13}+L_{42},L_{23}+L_{41},L_{31}+L_{24},L_{32}+L_{14} $ & $\id \oplus \sgn \oplus \xi$ \\
$\trip{4}{0}{1}e$  & $L_{13}+L_{24},L_{23}+L_{14},L_{31}+L_{42},L_{32}+L_{41} $ & $\id \oplus \sgn \oplus \zeta_1 \oplus \zeta_2$  \\
$\trip{4}{0}{1}f$ &  $\P_{(13)},\P_{(14)}, \P_{(23)}, \P_{(24)}$ & $\id  \oplus \zeta_2 \oplus \xi$ \\
$\trip{4}{1}{0}a$  & $L_{12},L_{21}, L_{34},L_{43}$ & $\id  \oplus \zeta_2 \oplus \xi$ / $\id  \oplus  \xi$   \\
$\trip{4}{1}{0}b$ & $L_{12}+L_{34},L_{12}+L_{43},L_{21}+L_{34},L_{21}+L_{43} $ & $\id  \oplus \zeta_2 \oplus \xi$ / $\id  \oplus  \xi$  \\ 
$\trip{4}{\frac{1}{3}}{\frac{2}{3}}a$ & $R_1,R_2,R_3,R_4$ &$\id  \oplus \zeta_2 \oplus \xi$ / $\id  \oplus  \xi$  \\
$\trip{4}{\frac{1}{3}}{\frac{2}{3}}b$ & $V_1,V_2,V_3,V_4$ & $\id  \oplus \zeta_2 \oplus \xi$ / $\id  \oplus  \xi$    \\
$\trip{4}{\frac{1}{3}}{\frac{2}{3}}c$ & $H_1,H_2,H_3,H_4$ & $\id  \oplus \zeta_2 \oplus \xi$ / $\id  \oplus  \xi$   \\
$\trip{4}{\frac{1}{3}}{\frac{2}{3}}d$ & $R_{13}^{+}, R_{14}^{+}, R_{23}^{+}, R_{24}^{+}$ & $\id \oplus \sgn \oplus \xi$ / $\id  \oplus  \xi$ \\
$\trip{4}{\frac{1}{3}}{\frac{2}{3}}e$ & $H_{13}^{+}, H_{14}^{+}, H_{23}^{+}, H_{24}^{+}$ &$\id \oplus \sgn \oplus \xi$ / $\id  \oplus  \xi$  \\
$\trip{4}{\frac{1}{3}}{\frac{2}{3}}f$ & $V_{13}^{+}, V_{14}^{+}, V_{23}^{+}, V_{24}^{+}$ & $\id \oplus \sgn \oplus \xi$ / $\id  \oplus  \xi$  \\
$\trip{4}{\frac{1}{7}}{\frac{6}{7}}$ & $V_1+\Q{12}, V_2+\Q{12}, V_3+\Q{34}, V_4+\Q{34}$ & $\id  \oplus \zeta_2 \oplus \xi$ / $\id  \oplus  \xi$   \\
$\trip{4}{\frac{1}{2}}{\frac{1}{2}}a$ & $R_{13}^{+}-(L_{14}+L_{32}), R_{14}^{+}-(L_{13}+L_{42}), R_{23}^{+}-(L_{24}+L_{31}), R_{24}^{+}-(L_{23}+L_{41})$ & $\id \oplus \zeta_2 \oplus \xi$ \\
$\trip{4}{\frac{1}{2}}{\frac{1}{2}}b$ & $\P_{(1234)},\P_{(1243)},\P_{(1324)},\P_{(1342)}$  &$\id \oplus \sgn \oplus \xi$ \\
$\trip{4}{\frac{1}{5}}{\frac{4}{5}}a$ & $2R_1+\Q{34}, 2R_2+\Q{34}, 2R_3+\Q{12}, 2R_4+\Q{12}$ & $\id \oplus \zeta_2 \oplus \xi$ \\
$\trip{4}{\frac{1}{5}}{\frac{4}{5}}b$ & $2H_1+\Q{34}, 2H_2+\Q{34}, 2H_3+\Q{12}, 2H_4+\Q{12}$ &$\id \oplus \zeta_2 \oplus \xi$\\
$\trip{4}{\frac{1}{5}}{\frac{4}{5}}c$ & $2V_1+\Q{12}, 2V_2+\Q{12}, 2V_3+\Q{34}, 2V_4+\Q{34}$ & $\id \oplus \zeta_2 \oplus \xi$\\
$\trip{4}{\frac{3}{5}}{\frac{2}{5}}$ & $V_1+\P_{(34)}, V_2+\P_{(34)}, V_3+\P_{(12)}, V_4+\P_{(12)}$ & $\id \oplus \zeta_2 \oplus \xi$\\ \hline \hline
$\trip{8}{0}{1}a$ & $ L_{13}, L_{14},L_{23},L_{24},L_{31},L_{32},L_{41},L_{42}$ & $\id \oplus \sgn \oplus \zeta_1 \oplus \zeta_2 \oplus 2 \xi$\\  
$\trip{8}{0}{1}b$ & $ \P_{(13)(24)}-L_{13}+L_{23},\P_{(13)(24)}-L_{24}+L_{14}, \P_{(13)(24)}-L_{31}+L_{41},\P_{(13)(24)}-L_{42}+L_{32}, 
\P_{(14)(23)}-L_{14}+L_{24}, \P_{(14)(23)}-L_{23}+L_{13}, \P_{(14)(23)}-L_{32}+L_{42},\P_{(14)(23)}-L_{41}+L_{31}$
& $\id \oplus \sgn \oplus \zeta_1 \oplus \zeta_2 \oplus 2 \xi$ / $\id \oplus \sgn \oplus \zeta_1 \oplus \xi$\\
$\trip{8}{\frac{1}{3}}{\frac{2}{3}}a$ & $R_1-L_{14}+L_{41}, R_1-L_{13}+L_{31}, R_2-L_{24}+L_{42}, R_2-L_{23}+L_{32}, 
R_3-L_{31}+L_{13}, R_2-L_{32}+L_{23}, R_4-L_{41}+L_{14}, R_4-L_{42}+L_{24}$ & $\id \oplus \sgn \oplus \zeta_1 \oplus \zeta_2 \oplus 2 \xi$/ $\id \oplus \zeta_1 \oplus \zeta_2 \oplus 2 \xi$ \\
$\trip{8}{\frac{1}{3}}{\frac{2}{3}}b$ & $\P_{(123)},\P_{(124)},\P_{(132)},\P_{(134)},\P_{(142)},\P_{(143)},\P_{(234)},\P_{(243)}$ & $\id \oplus \sgn \oplus \zeta_1 \oplus \zeta_2 \oplus 2 \xi$/ $\id \oplus \zeta_1 \oplus \zeta_2 \oplus 2 \xi$  \\
$\trip{8}{\frac{1}{2}}{\frac{1}{2}}$ &$L_{12}+L_{34}+2L_{13}, L_{12}+L_{34}+2L_{31}, L_{12}+L_{43}+2L_{14}, L_{12}+L_{43}+2L_{41}, 
L_{21}+L_{34}+2L_{23}, L_{21}+L_{34}+2L_{32}, L_{21}+L_{43}+2L_{24}, L_{21}+L_{43}+2L_{42}$ &$\id \oplus \sgn \oplus \zeta_1 \oplus \zeta_2 \oplus 2 \xi$/ $\id \oplus \zeta_1 \oplus \zeta_2 \oplus 2 \xi$ \\\hline
\end{tabular}
\caption{ \label{tab:ray_orbits} \small{Ray-orbits of the Lie Markov models with $\G$ symmetry with the corresponding generators. The 3rd column describes the decomposition into irreducible representations of both the abstract set generated by these orbits and the subspace of $\LL_{GM}$ spanned by them (see Remark \ref{ray_orbits}). When both decompositions are equal, we write it down only once. }}
\end{table}
\end{landscape}

\bibliographystyle{plain}
\bibliography{masterAB}

\end{document}